\documentclass[10pt]{article} % For LaTeX2e
\usepackage[preprint]{tmlr}
\usepackage{amsmath,amsfonts,bm}

\def\eqref#1{Equation~\ref{#1}}
\def\Eqref#1{(Equation~\ref{#1})}
\def\1{\bm{1}}

\DeclareMathAlphabet{\mathsfit}{\encodingdefault}{\sfdefault}{m}{sl}
\SetMathAlphabet{\mathsfit}{bold}{\encodingdefault}{\sfdefault}{bx}{n}

\newcommand{\E}{\mathbb{E}}

\usepackage{hyperref}
\usepackage{url}
\usepackage[pdftex]{graphicx}
\usepackage[ruled, vlined]{algorithm2e}
\usepackage{caption}
\usepackage{subcaption}
\usepackage{wrapfig}
\usepackage{enumitem}
\usepackage{float}

\title{Fast approximate Bayesian multidimensional scaling with consistency guarantees}

\author{\name Ami Sheth \email amisheth26@g.ucla.edu \\
      \addr Department of Biostatistics\\
      University of California, Los Angeles 
      \AND
      \name Aaron Smith \email smith.aaron.matthew@gmail.com \\
      \addr Department of Mathematics and Statistics \\
      University of Ottawa
      \AND
      \name Andrew J.~Holbrook \email aholbroo@g.ucla.edu\\
      \addr Department of Biostatistics\\
      University of California, Los Angeles}

\def\month{05}  % Insert correct month for camera-ready version
\def\year{2026} % Insert correct year for camera-ready version
\def\openreview{\url{https://openreview.net/forum?id=XXXX}} % Insert correct link to OpenReview for camera-ready version

\newtheorem{theorem}{Theorem}
\newtheorem{lemma}{Lemma}
\newtheorem{proposition}{Proposition}

\newtheorem{assumption}{Assumption}
\newtheorem{remark}{Remark}
\newtheorem{definition}{Definition}
\newtheorem{proof}{Proof}

\newcommand{\RR}{\mathbb R}
\newcommand{\mb}{\mathbf}

\newcommand{\Pbb}{\mathbb{P}}

\newcommand{\cG}{\mathcal{G}}
\newcommand{\cO}{\mathcal{O}}
\newcommand{\cN}{\mathcal{N}}
\newcommand{\norm}[1]{\left\|#1\right\|}

\newcommand{\latentData}{\mb{X}}
\newcommand{\latentdata}{\mb{x}}
\newcommand{\highdata}{\mb{y}}

\newcommand{\distanceMatrix}{\mb{D}}

\begin{document}

\maketitle

\begin{abstract}
Bayesian multidimensional scaling (BMDS) embeds $n$ objects in a low-dimensional space to approximately preserve an observed dissimilarity matrix. Compared to classic MDS, BMDS is more robust to model misspecification and supports posterior uncertainty quantification and joint estimation within hierarchical models. However, standard BMDS inference is computationally prohibitive, requiring $O(n^2)$ operations per MCMC iteration to evaluate the likelihood. We propose Barnes--Hut BMDS (BH-BMDS), which uses a tree-based approximation to the likelihood and a Gibbs sampler that leverages this structure, remaining compatible with hierarchical extensions. BH-BMDS reduces computational complexity to $O(n \log n)$ while preserving the geometric fidelity of the embedding. We further establish consistency for the stationary measure of BH-BMDS, proving that it concentrates around the true latent configuration even as the total error of the surrogate likelihood diverges. Notably, this consistency holds in the infinite-dimensional limit. We evaluate the approximation on datasets with diverse structure, including air traffic networks, arXiv abstracts, MNIST images and neural activity recordings from mouse models of tau pathology. Across all settings, BH-BMDS closely matches BMDS while achieving substantial computational gains, with approximately 10-fold speedups at $n=1{,}000$ and 70-fold speedups at $n=10{,}000$. These gains increase with $n$, demonstrating strong empirical scalability.
\end{abstract}

\section{Introduction} \label{sec:intro}
Dissimilarity data arise in many domains, including psychology, physics and biology, where a goal is to quantify how pairs of complex or high-dimensional objects differ from one another. A common approach is to embed objects into a low-dimensional space, enabling visualization and downstream modeling while preserving the observed dissimilarities. Bayesian multidimensional scaling (BMDS) provides a probabilistic framework for this task, mapping $n$ objects to latent coordinates in a low-dimensional Euclidean space while modeling observed dissimilarities through a likelihood-based formulation. Unlike classical approaches such as PCA \citep{pca_pearson, pca_hotelling} and MDS \citep{torgerson1952cmds}, or more recent approaches such as t-SNE \citep{tsne2008} and UMAP \citep{umap}, BMDS enables posterior uncertainty quantification over latent embeddings and supports integration into hierarchical models. These properties have made BMDS a valuable tool in applications such as viral phylogeography \citep{bedford2015, holbrook2021bigbmds, li2023}, incorporation of auxiliary information \citep{lin2019} and clustering \citep{BMDScluster, sheth2026}.

Despite these advantages, BMDS is computationally constrained by the $O(n^2)$ cost of likelihood evaluation, which requires summing over all pairwise distances \Eqref{eq:bmds_ll}. As $n$ grows, this quadratic scaling becomes prohibitive, restricting BMDS inference to relatively small datasets. Existing approximation methods \citep{sheth2026} lack rigorous posterior consistency guarantees for two or more dimensional latent spaces, raising concerns about their practical reliability. To address these limitations, we propose an approximation to BMDS based on the Barnes--Hut algorithm (BH-BMDS). This framework scales as $O(n \log n)$, preserves the original Bayesian formulation and guarantees the consistency for the stationary measure under the approximation and its stochastic realizations.

\section{Related Work}
Several approaches attempt to mitigate the cost of BMDS. \cite{bedford2015} adopt a misspecified Gaussian likelihood to bypass the computational bottleneck imposed by non-negativity constraints; however, this assumption places probability mass outside the support of the observed data. \cite{holbrook2021bigbmds} leverage multi-core central processing units, vectorization and graphic processing units to significantly reduce runtime but depend on specialized hardware for scalability. Both methods require $O(n^2)$ floating point operations. To achieve sub-quadratic scaling, \cite{sheth2026} induce sparsity in the dissimilarity matrix, achieving $O(mn)$ complexity, where $m \ll n$ represents the number of landmarks. Although this approach yields substantial speedups, $m$ must scale with $n$ to maintain accuracy, and its selection introduces an additional tuning parameter. Moreover, they do not provide posterior consistency guarantees for subsampled dissimilarity matrices in dimensions $d > 1$. 

Tree-based methods offer an alternative strategy for approximating pairwise interactions. For instance, \cite{tsne2014} use the Barnes--Hut algorithm to accelerate t-SNE by approximating long-range interactions in an $n$-body system. While this scales as $O(n\log n)$, the paper lacks formal accuracy bounds for the approximation error. Noting that the BMDS likelihood admits a similar $n$-body interpretation, we adapt the Barnes--Hut algorithm to the BMDS setting (BH-BMDS), enabling $O(n \log n)$ likelihood approximations within an MCMC framework. We further establish consistency under uniform vanishing pairwise loss error for BH-BMDS, proving the stationary measure concentrates on the true configuration despite errors introduced by the approximation and noisy MCMC updates.

\section{Preliminaries} \label{sec:methods}
\subsection{Bayesian multidimensional scaling} \label{sec:BMDS}
Suppose we have high-dimensional observations, e.g., $\highdata_1, \dots \highdata_n \in \RR^{\Omega}$, and let $D_{ij} = \mathcal{D}(\mathbf{y}_i, \mathbf{y}_j)$ denote the dissimilarity between observations. Bayesian multidimensional scaling (BMDS) \citep{ohraftery2001bmds} assumes that the observed dissimilarities follow
\begin{align}\label{eq:bmds_model}
    D_{ij} \;=\; \big[\,r_{ij} + \sigma Z_{ij}\,\big]_+,
    \qquad r_{ij} = \norm{\latentdata_i -\latentdata_j},
    \qquad Z_{ij}\stackrel{\text{i.i.d.}}{\sim}N(0,1),
\end{align}
independent across pairs and conditioned on the low-dimensional latent locations $\latentData = \{\latentdata_i\}^n_{i = 1} \subset \RR^d$. Here, $[t]_+ = \max\{t, 0\}$ represents truncation at zero, and the BMDS variance $\sigma^2$ represents the level of noise distorting the latent distances. This yields the log-likelihood
\begin{align} \label{eq:bmds_ll}
    \ell(\distanceMatrix \mid \latentData, \sigma^2) = -\frac{N}{2} \log(2\pi\sigma^2) - \sum_{i < j} \left[ \frac{(D_{ij} - r_{ij})^2}{2\sigma^2} + \log \Phi \left( \frac{r_{ij}}{\sigma} \right) \right], 
\end{align} 
where $\distanceMatrix = \{D_{ij}\}$ is the observed symmetric $n \times n$ dissimilarity matrix, $N = \binom{n}{2}$, and $\Phi(\cdot)$ is the standard normal CDF. Because the likelihood involves a summation over all unordered pairs, its evaluation scales as $O(n^2)$. Consequently, we seek to reduce this computational burden using tree-based approximation methods, such as Barnes–Hut algorithms, that scale as $O(n\log n)$.

\subsection{Barnes--Hut algorithm} \label{sec:BH-algorithms}
\begin{figure}
\centering
\begin{subfigure}[c]{0.3\textwidth}
    \centering
    \includegraphics[width=0.9\linewidth]{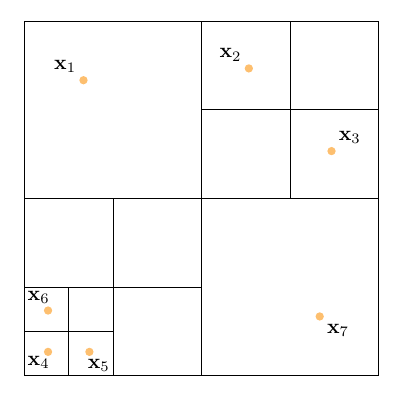}
    %\caption{Spatial representation}
\end{subfigure}
\begin{subfigure}[c]{0.3\textwidth}
    \centering
    \includegraphics[width=0.9\linewidth]{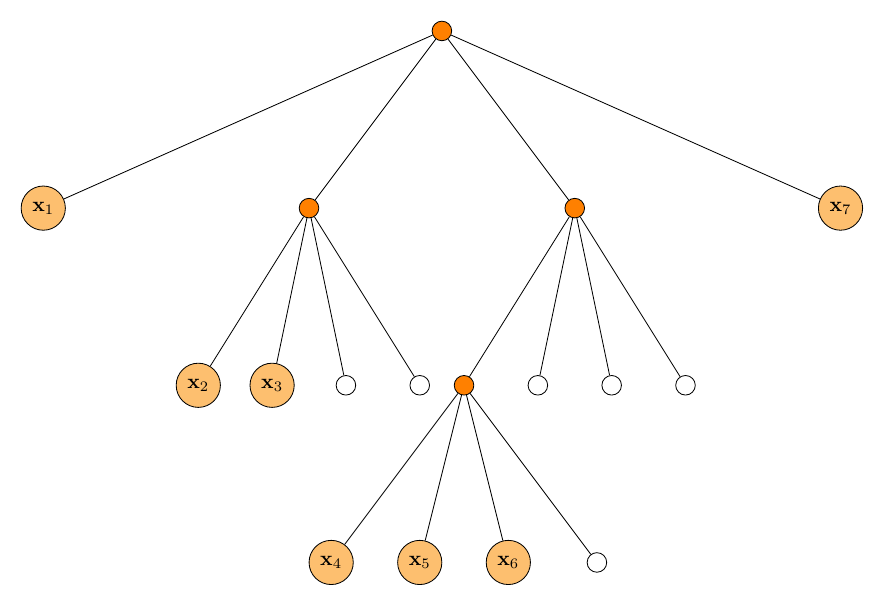}
    %\caption{Quadtree}
\end{subfigure}
\caption{Barnes--Hut structural components. The left figure is the spatial representation of a recursive subdivision of the embedding space into four quadrants while the right figure is the respective quadtree data structure. One builds the quadtree using Algorithm~\ref{alg:QT}.}
\label{fig:BH-QT}
\end{figure}

The gravitational $n$-body problem involves computing the gravitational forces exerted by all pairs of particles on one another. Direct evaluation of these forces requires $O(n^2)$ computations which becomes prohibitive for large datasets. To reduce this cost, \cite{barneshut1986} propose a hierarchical approximation algorithm that estimates the influence of distant particles as a single aggregate interaction.

To organize groups of particles efficiently, the Barnes--Hut algorithm uses a tree data structure, e.g., a quadtree $\mathcal{T} \in \RR^2$ (Figure \ref{fig:BH-QT}). A quadtree $\mathcal{T}$ is built by recursively subdividing the embedding space into four quadrants based on particle locations (Algorithm \ref{alg:QT}). A cell refers to a fixed spatial region, and a node corresponds to a cell in the tree. Each leaf node contains at most one particle, while internal nodes store summary statistics of all particles within the cell as a pseudo-particle. As particles are inserted, these summaries are updated recursively. The quadtree $\mathcal{T}$ is constructed in $O(n \log n)$ time and enables efficient spatial queries and interaction approximations.

To compute interactions, the quadtree $\mathcal{T}$ is traversed starting from the root node. At each node $k \in \{0, \dots, K\}$ where the total number of nodes $K = O(n)$, we consider its corresponding cell and pseudo-particle. Let $w_k$ be the width of this cell, $\bar{r}_{ik}$ the Euclidean distance between the target particle $\latentdata_i$ and the pseudo-particle's center of mass, and $\rho_{ik} = w_k/\bar{r}_{ik}$. If $\rho_{ik} < \theta$, where $\theta$ is a user-specified threshold, the pseudo-particle is considered distant and used as an approximation of its constituent particles. Otherwise, we recursively examine the node's children; interactions with particles in leaf nodes are computed directly. The parameter $\theta$ controls the trade-off between accuracy and computational efficiency; larger values of $\theta$ result in fewer tree traversals and faster computation at the cost of reduced accuracy. $\theta \ge 0.5$ typically yields $O(n \log n)$, while $\theta = 0$ recovers the exact $O(n^2)$ computation.

\section{A tree-based framework for Bayesian multidimensional scaling} \label{sec:BH-BMDS}
\subsection{Barnes--Hut Bayesian multidimensional scaling}
Given latent embeddings $\{\latentdata_i\}^n_{i = 1} \subset \RR^2$, we construct a quadtree $\mathcal{T}$ (Algorithm \ref{alg:QT}) where each node $k$ corresponds to a spatial cell $C_k$ in the latent space with index set $\mathcal{I}_k = \{i \in \{1, \dots, n\} : \latentdata_i \in \mathcal{C}_k \}$. While leaf nodes contain single observations, internal nodes store pseudo-particles summarizing the number of points ($N_k = |\mathcal{I}_k|$) and the centers of mass of the high-dimensional ($\bar{\highdata}_k = \frac{1}{N_k} \sum_{l \in \mathcal{I}_k} \highdata_l$) and low-dimensional ($\bar{\latentdata}_k = \frac{1}{N_k} \sum_{l \in \mathcal{I}_k} \latentdata_l$) points within $\mathcal{C}_k$. The root node ($k = 0$) summarizes the entire dataset. 

Unlike t-SNE \citep{tsne2014}, BMDS \Eqref{eq:bmds_ll} depends on squared differences between high-dimensional and low-dimensional dissimilarities. To preserve the $O(n \log n)$ complexity, we approximate the high-dimensional dissimilarity $
\bar{D}_{ik} = \lVert \highdata_i - \bar{\highdata}_k \rVert
$ using the Euclidean distance between the observed point $\highdata_i$ and the pseudo-particle’s high-dimensional center of mass $\bar{\highdata}_k$. This choice avoids storing or computing all pairwise dissimilarities which would require $O(n^2)$ memory or operations, at the cost of introducing approximation error.

During likelihood computations, we perform a depth-first traversal of quadtree $\mathcal{T}$ for each point $\latentdata_i$ and compute
$
\bar{r}_{ik} = \lVert \latentdata_i - \bar{\latentdata}_k \rVert, 
$
the Euclidean distance between the target point $\latentdata_i$ and the pseudo-particle’s low-dimensional center of mass $\bar{\latentdata}_k$. This quantity is used within the Barnes--Hut criterion to determine tree traversal. The Barnes--Hut BMDS (BH-BMDS) framework reduces the computational burden of evaluating the BMDS log-likelihood through the use of summary statistics; dissimilarities between a target point and points within a sufficiently distant cell can be represented by their aggregate summary. 

Because each unordered pairwise interaction is visited twice (one per target point), the normalization term is adjusted by a factor of two. The BH-BMDS log-likelihood is
\begin{align} \label{eq:bh_bmds_ll}
    \tilde{\ell}(\distanceMatrix \mid \latentData, \sigma^2, \mathcal{T})
    &= -N \log(2\pi\sigma^2)
    - \sum_{i=1}^n \sum_{k \in \mathcal{B}_i}
    N_k \biggl[
        \frac{(\bar{D}_{ik} - \bar{r}_{ik})^2}{2\sigma^2}
        + \log \Phi \!\left( \frac{\bar{r}_{ik}}{\sigma} \right)
    \biggr], \nonumber \\ 
    &\equiv -N \log(2\pi\sigma^2)
    - \sum_{i=1}^n \tilde{\ell}(\latentdata_i \mid \mathcal{T}),
\end{align} 
where $\mathcal{B}_i$ denotes the set of nodes satisfying the Barnes--Hut acceptance criterion (or leaf nodes) for target point $\latentdata_i$, and $\tilde{\ell}(\latentdata_i \mid \mathcal{T})$ is the pointwise contribution to the BH-BMDS log-likelihood from a Barnes--Hut traversal of $\latentdata_i$ starting at the root node (Algorithm~\ref{alg:deter}). The BH-BMDS log-likelihood approximates the full BMDS log-likelihood, and when $\theta = 0$, it reduces to the exact formulation since $\mathcal{B}_i$ contains only leaf nodes.

\subsection{Noisy Barnes--Hut algorithm} \label{sec:stochBH}
Under the Barnes--Hut approximation, the BMDS log-likelihood becomes a piecewise function of the latent positions due to the discrete tree structure. For a fixed set of points, small changes in the embeddings can alter traversal paths or cell assignments, leading to discontinuous changes in the BH-BMDS log-likelihood. Therefore, we introduce a probabilistic acceptance rule to smooth the transition of accepting pseudo-particles near the threshold $\theta$, mitigating abrupt changes in the Barnes--Hut approximation as $\rho_{ik}$ varies for small movements in $\latentdata_i$.

Noisy Barnes--Hut follows the same tree traversal as the deterministic version but relaxes the hard acceptance criterion. Instead of the binary indicator $\mathbb{I}(\rho_{ik} < \theta)$, noisy Barnes--Hut treats this decision probabilistically. A pseudo-particle is accepted if $u < p(\rho_{ik} \mid \theta)$, where $u \sim \text{Unif}(0,1)$ and the acceptance probability is defined by the logistic function
\begin{align}\label{eq::logfunc}
    p(\rho \mid \theta) = \left( 1 + \exp\left[ -\frac{c}{\theta}(\theta - \rho) \right] \right)^{-1}.
\end{align}
The parameter $\theta$ is a user-specified threshold, while $c$ is a constant controlling the slope of the logistic function. We fix $c$ so that the transition between acceptance and rejection occurs over a narrow band of $\rho_{ik}$ values; specifically, $p(\rho \mid \theta) = 0.8$ when $(\theta - \rho)/\theta = 0.2$. In our implementation, this corresponds to $c = 6.93$. As $c \to \infty$, the probability $p(\rho \mid \theta)$ converges to the indicator function $\mathbb{I}(\rho < \theta)$, recovering the deterministic rule. The method preserves the expected computational complexity of $O(n \log n)$ with occasional additional node evaluations near the acceptance boundary.

By treating $\mathcal{B}_i$ as a realization of a random process with controlled variance, the stochastic approach improves mixing in MCMC settings (Section \ref{sec:BayesianComp}). However, it introduces approximation noise that must be rigorously accounted for. We address this by proving that the stationary measure of BH-BMDS is consistent (Section \ref{sec:PostCons}).

\subsection{Consistency for noisy BH-BMDS} \label{sec:PostCons}
%\textcolor{red}{[AMS: We do prove that the posterior is consistent, but more importantly we prove that the stationary measure of the algorithm is consistent despite being (pointwise) very far from the posterior. I'd change the title of this section, and probably propagate the change to places where we refer to consistency.]}
We prove consistency for a class of surrogate-likelihood algorithms, even in the regime where the surrogate is not pointwise accurate. See Section \ref{app:main_prop} for an expanded discussion of this theoretical framework. 

Fix dimension $d \in \mathbb{N}$ and noise level $\sigma > 0$. For each $n \in \mathbb{N}$, we assume the data are generated from "true" latent positions $\mathbf{x}_1^\star, \dots, \mathbf{x}_n^\star \stackrel{\text{i.i.d.}}{\sim} \mathrm{Unif}([0,1]^d)$, and the observed dissimilarities $D_{ij}$ are generated according to Model \ref{eq:bmds_model} 
%\textcolor{red}{[AMS: I'm OK with this, but personally I would prefer to make a much smaller distinction between capping observations at 0 and capping them at $\sqrt{M}$ - in both cases, we're just saying that the observations can't be obviously inconsistent with the underlying data-generating process for the latent positions. Making this change also clears up a lot of the later discussion - we're not talking about a funny proxy posterior based on truncation, we're talking about the actual posterior you get once you fix up obviously-wrong observations in a very conservative way.]}, 
where $r_{ij} = r_{ij}^\star \equiv \norm{\latentdata_i^\star - \latentdata_j^\star}$. 
%We establish posterior consistency for noisy MCMC by studying a surrogate risk function, a proxy for the posterior distribution, using capped observations $\tilde{D}_{ij}$.  \textcolor{red}{[AMS: I think this confuses objects a bit - we refer to some things that aren't posteriors as posteriors, and other things that ARE posteriors as surrogates. A possible replacement: We show that the stationary measure for noisy MCMC gives a consistent estimator for the ground truth, even though the surrogate likelikelihoods are pointwise quite far from the true posterior.]}
We define the truncation function $T_M(u)=\min\{u,M\}$ for $u\ge 0$, and a maximum distance $M \ge \sqrt d$ . We then set the capped dissimilarities and their expected values as
\begin{align}\label{eq:noisyD}
    \tilde D_{ij}=T_M(D_{ij}),
    \qquad
    \mu_M(r)=\E\big[T_M([r+\sigma Z]_+)\big],\qquad r\ge 0
\end{align}
such that the conditional law of $\tilde D_{ij}$ given $r_{ij}^\star=r$ has a piecewise density $q_{M,r}(y)$ \Eqref{eq:cappedDensity}. Since the maximum possible distance within this unit hypercube is bounded by $\sqrt d$, the capped distances are consistent with the underlying data-generation process. 

For a candidate configuration $\latentData =(\latentdata_1,\dots,\latentdata_n)\in([0,1]^d)^n$ and $r_{ij}(\latentData)=\norm{\latentdata_i-\latentdata_j}$, define the loss (negative log-likelihood) as %the exact capped loss as \textcolor{red}{[AMS: I'm sure I introduced it, but the ``exact capped loss" is just the log-likelihood. Since we're hurting for space, it probably makes sense to squeeze a bunch of the next few lines - this is just the ``usual" Bayesian machine.]}
\begin{align}\label{eq:loss}
\ell_M(y,r) 
\;=\;
-\log q_{M,r}(y),
\qquad y\in[0,M],\ r\in[0,\sqrt d],
\end{align} %Letting $N = \binom{n}{2}$ denote the total number of pairs \textcolor{red}{[AMS: We probably don't need notation for $N$ here, as it is only used once. Please put it in the appropriate part of the appendix if we remove it here.]}, we define 
and the associated empirical risk as
\begin{align}\label{eq:empRisk}
R_n(\latentData) \;=\; \frac1N\sum_{1\le i<j\le n} \ell_M(\tilde D_{ij},\,r_{ij}(\latentData)).
\end{align}

%To capture the stochasticity of our noisy MCMC, let $\Xi$ denote the space of algorithmic realizations \textcolor{red}{[AMS: This might be my fault too, but ``space of algorithmic realizations" seems both verbose and imprecise to me. I think we actually want to flip the order of the objects in this paragraph, as some of the rest of the paragraph doesn't really make sense as-written. What we actually have are a collection of surrogate log-likelihoods indexed by $\xi$. These induce algorithms, and a measure on $\Xi$ induces a noisy MCMC algorithm. Here we introduce $\Xi$ first and try to claim that this comes with a unique associated surrogate log-likelihood, but that's definitely not true. ]}. Let $\xi_0 \in \Xi$ represent the deterministic reference algorithm, e.g., standard Barnes--Hut, and let $\xi \in \Xi$ be a stochastic realization. 
We consider a family of surrogate log-likelihoods indexed by $\xi \in \Xi$, where each $\xi$ induces an MCMC algorithm. Let $\xi_0 \in \Xi$ represent the deterministic reference algorithm, e.g., standard Barnes--Hut. By defining a probability measure on $\Xi$, we draw a stochastic realization $\xi$, yielding the randomized surrogate log-likelihood that drives our noisy MCMC. We denote $\widehat\ell_{ij}^{(\xi)}(\latentData)$ as the resulting approximation of $\ell_M(\tilde D_{ij},r_{ij}(\latentData))$ and define the surrogate risk as
\begin{align}\label{eq:approxRisk}
\widehat R_n^{(\xi)}(\latentData) \;=\; \frac1N\sum_{1\le i<j\le n} \widehat\ell_{ij}^{(\xi)}(\latentData).
\end{align}
Note that $\widehat\ell_{ij}^{(\xi)}$ depends on the entirety of $\latentData$ due to the quadtree construction. 

Since distances are invariant under Euclidean isometries, e.g., rigid motion, we measure convergence using the orbit distance $d_{\cG}$ defined in Remark \ref{rmk1:OD}. Specifically, we prove that the mass of the stationary measure outside any fixed $d_{\cG}$-ball around the true configuration $\latentData^\star$ goes to $0$ (in probability) as $n\to\infty$. To achieve this, we assume a standard data-generating process characterized by a random design, independent paired noise and sufficient prior regularity around the true configuration (Assumption~\ref{app:A:model}). The key assumption is \emph{uniform vanishing error} for the \emph{pairwise losses}:
\begin{assumption} [Small Approximation Error] \label{A:error}
There exists a deterministic sequence $\varepsilon_n\downarrow 0$ such that, with probability $\to 1$ under the joint law of $(\latentData^\star, \mb{D})$, the reference approximation $\widehat \ell^{(\xi_0)}$ satisfies 
\begin{align}\label{eq:pairwiseError}
\sup_{\latentData\in([0,1]^d)^n}\ \max_{1\le i<j\le n}\ \Big|\widehat\ell_{ij}^{(\xi_0)}(\latentData)-\ell_M(\tilde D_{ij},r_{ij}(\latentData))\Big|
\ \le\ \varepsilon_n.
\end{align}
\end{assumption}

While the difference in the pairwise approximation error must go to 0 as $n$ grows, this assumption allows the difference in the total log-likelihood to diverge to infinity, provided the growth remains sub-quadratic. 
%Note that, on the high-probability event \ref{eq:pairwiseError}, the error in the risk is also small \textcolor{red}{[AMS: If we want to include this in the main text, I'd give exactly the opposite emphasis. Yes, the difference in risk goes to 0... but the difference in log-likelihoods is allowed to go to infinity as long as it is sub-quadratic. This sort of analysis isn't unique, but as far as I know this is the first such analysis for MCMC algorithms with nonparametric/infinite-dimensional samples to learn.]}
%\begin{align}\label{eq:riskError}
%\sup_{\latentData\in([0,1]^d)^n}\big|\widehat R_n^{(\xi_0)}%(\latentData)-R_n(\latentData)\big|\le \varepsilon_n.
%\end{align}

\begin{theorem}[Consistency for noisy MCMC]\label{thm:NoisyPC}
Fix $d\ge 1$, $\sigma>0$, $M\geq \sqrt d$ and a parameter space $\Xi$. Suppose there exists a reference parameter $\xi_0 \in \Xi$ such that the chain $P_{\xi_0}$ is uniformly ergodic and let $\widehat{\ell}_{ij}^{(\xi_0)}$ be an approximation satisfying Assumption \ref{A:error}. Let $P_{\xi}$ be a Markov chain with stationary measure $\widehat{\Pi}_{n}^{(\xi)}$, and define the mixture kernel $P=\int_{\Xi} P_{\xi}\,\mu(d\xi)$ for a probability measure $\mu$ on $\Xi$. Let $\widetilde\Pi_n$ be an invariant probability measure of $P$ and set $\kappa_n$ and $\Delta_n$ as Equations \ref{app:thm_lkdef} and \ref{app:thm_deltadef}, respectively. If $\kappa_n\Delta_n\xrightarrow{\Pbb}0$, then for every $\delta > 0$,
\begin{align}
\widetilde\Pi_n\Big(\big\{\latentData:\ d_{\cG}(\latentData, \latentData^\star)>\delta\big\}\Big)\xrightarrow{\Pbb}0.
\end{align}
\end{theorem}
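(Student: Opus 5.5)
The proof has two parts. The first is a statistical consistency statement for the reference stationary measure $\widehat\Pi_n^{(\xi_0)}$. The second is a perturbation bound that transfers that statement to the invariant measure $\widetilde\Pi_n$ of the mixture kernel $P$. The quantities $\kappa_n$ and $\Delta_n$ are built for exactly this transfer. I read $\kappa_n$ as a mixing constant of the uniformly ergodic reference chain, for example $\sum_{t\ge0}\sup_x\|P_{\xi_0}^t(x,\cdot)-\widehat\Pi_n^{(\xi_0)}\|_{TV}$ or $t_{\mathrm{mix}}$. I read $\Delta_n$ as a kernel discrepancy, $\sup_{\latentData}\|P(\latentData,\cdot)-P_{\xi_0}(\latentData,\cdot)\|_{TV}$.

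\textbf{Step 1 (perturbation).} I would use the standard perturbation bound for uniformly ergodic chains (Mitrophanov type). Write $\widetilde\Pi_n-\widehat\Pi_n^{(\xi_0)}=\widetilde\Pi_n(P-P_{\xi_0})\sum_{t\ge0}(P_{\xi_0}^t-\1\widehat\Pi_n^{(\xi_0)})$, which uses the invariance $\widetilde\Pi_n P=\widetilde\Pi_n$. This gives
\begin{align*}
\|\widetilde\Pi_n-\widehat\Pi_n^{(\xi_0)}\|_{TV}\le C\,\kappa_n\Delta_n .
\end{align*}
Since $\kappa_n\Delta_n\to0$ in probability, it suffices to show $\widehat\Pi_n^{(\xi_0)}(B_\delta^c)\to0$ in probability, where $B_\delta=\{d_{\cG}(\latentData,\latentData^\star)\le\delta\}$.

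\textbf{Step 2 (reference measure as a Gibbs posterior).} Take $\widehat\Pi_n^{(\xi_0)}(d\latentData)\propto\exp(-N\widehat R_n^{(\xi_0)}(\latentData))\,\pi_n(d\latentData)$. On the event of Assumption~\ref{A:error}, we have $|\widehat R_n^{(\xi_0)}-R_n|\le\varepsilon_n$ uniformly in $\latentData$. Hence for any set $A$, the ratio $\widehat\Pi_n^{(\xi_0)}(A)/\widehat\Pi_n^{(\xi_0)}(B_{\eta})$ is at most $e^{2N\varepsilon_n}$ times the same ratio computed with $R_n$. This is the only place where the divergent total error $N\varepsilon_n$ enters.

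\textbf{Step 3 (separation versus prior mass).} The core is a uniform law of large numbers plus identifiability. Let
\begin{align*}
R(\latentData)=\frac1N\sum_{i<j}\E\,\ell_M(\tilde D_{ij},r_{ij}(\latentData)).
\end{align*}
Then $R(\latentData)-R(\latentData^\star)=\frac1N\sum_{i<j}\mathrm{KL}(q_{M,r^\star_{ij}}\|q_{M,r_{ij}})$, which is lower bounded by a strictly increasing function of $\frac1N\sum(r_{ij}-r^\star_{ij})^2$. This holds because $\ell_M$ is bounded and Lipschitz in $r$ on the compact range, and $\mu_M$ is strictly increasing.

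\emph{Uniform concentration.} Bounded differences over the $N$ independent noises, combined with a covering of $([0,1]^d)^n$ (or a chaining bound exploiting the pairwise structure), gives
\begin{align*}
\sup_{\latentData}|R_n-R|=o_{\Pbb}(1).
\end{align*}

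\emph{Rigidity.} I need that $\frac1N\sum(r_{ij}-r^\star_{ij})^2\ge g(\delta)>0$ whenever $d_{\cG}>\delta$, with high probability over the uniform design. I would prove this via Remark~\ref{rmk1:OD} and a compactness/stability argument for distance-matrix reconstruction on dense point clouds.

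\emph{Conclusion.} On $B_\delta^c$ the Gibbs weights are at most $e^{-N(R(\latentData^\star)+g'(\delta)-o(1))}$. On a small ball $B_\eta$, with $\eta$ chosen so that the risk excess there is below $g'(\delta)/2$, the prior regularity in Assumption~\ref{app:A:model} gives mass at least $e^{-o(N)}$. Since $N\varepsilon_n=o(N)$, the ratio is $e^{-cN}\to0$.

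\textbf{Main obstacle.} I expect the hardest part to be Step 3, in the infinite-dimensional regime. The parameter dimension $nd$ grows, so both the uniform LLN over $([0,1]^d)^n$ and the prior-mass lower bound must be shown to cost only $e^{o(N)}=e^{o(n^2)}$. Covering numbers scale like $(1/\eta)^{nd}=e^{O(n\log(1/\eta))}$, which is $o(n^2)$, so I would try that first. The rigidity bound, which converts average distance error into orbit distance uniformly in $n$, is the other delicate point. There I would use an $\varepsilon$-net of the true design together with the fact that points matching most distances to a dense net are pinned down up to isometry.
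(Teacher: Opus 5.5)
Apart from one step, your proposal is the paper's proof. Your telescoping identity derives directly the perturbation bound that the paper imports from Alquier et al. Your readings of $\kappa_n$ and $\Delta_n$ agree with the paper's up to an absolute constant: $\sum_{t\ge0}\min\{1,C_n\rho_n^t\}\le\kappa_n$, and by convexity the mixture discrepancy is at most the $\sup_{\xi}$ discrepancy. The remaining steps are exactly as in Appendix~\ref{app:PCproof}: the Gibbs ratio with slack $e^{2N\varepsilon_n}$, the Hoeffding-plus-net concentration, the KL identity for the risk excess, and the $e^{-O(n)}$ prior mass of a small orbit ball.

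The real difference is the rigidity lemma, which in your write-up is still only a plan.

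\textbf{The paper's route.} The paper proves rigidity quantitatively, in three moves.
\begin{enumerate}
\item Double centering is a Frobenius contraction, so with $G(\latentData)=\latentData_c\latentData_c^\top$ one has $\|G(\latentData)-G(\latentData^\star)\|_F\le\sqrt{2dN\Delta_{\mathrm{dist}}}$.
\item The Procrustes perturbation bound of Arias-Castro, Javanmard and Pelletier gives $\min_{\mb Q}\|\latentData_c-\latentData_c^\star\mb Q\|_F\le(1+\sqrt2)\|G(\latentData)-G(\latentData^\star)\|_F/s_{\min}(\latentData_c^\star)$ when $\Delta_{\mathrm{dist}}$ is small. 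The trivial bound $d_{\cG}^2\le d$ handles large $\Delta_{\mathrm{dist}}$.
\item A sub-Gaussian singular-value estimate gives $s_{\min}(\latentData_c^\star)\ge c_1\sqrt n$ with high probability.
\end{enumerate}
Combined with $\mathrm{KL}\ge c_{\mathrm{sc}}(r-r^\star)^2$ (Pinsker plus $\mu_M'\ge m_M$), this yields an $n$-free quadratic margin $c_{\mathrm{mg}}d_{\cG}^2$.

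\textbf{Making your compactness route work.} A compactness argument can replace all of this, but not one run in $([0,1]^d)^n$ for fixed $n$. There the constants would depend on $n$ and $\latentData^\star$, which is useless. Work instead with the couplings $\gamma_n=n^{-1}\sum_i\delta_{(\latentdata_i,\latentdata_i^\star)}$ on $[0,1]^d\times[0,1]^d$.
\begin{itemize}
\item Both $d_{\cG}^2$ and $\frac{n-1}{n}\Delta_{\mathrm{dist}}$ are weakly continuous functionals of $\gamma_n$.
\item Suppose couplings whose second marginals tend to $\mathrm{Unif}([0,1]^d)$ had $d_{\cG}\ge\delta$ and $\Delta_{\mathrm{dist}}\to0$. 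Prokhorov would give a limit supported on the graph of a distance-preserving map of $[0,1]^d$. That map is a rigid motion, which forces $d_{\cG}=0$, a contradiction.
\item This contradiction argument yields $g(\delta)>0$ valid on a weak neighborhood $\mathcal{U}$ of the uniform law.
\item Almost-sure weak convergence of the empirical design measure puts it in $\mathcal{U}$ with probability tending to one.
\end{itemize}

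\textbf{What each route buys.} Your route gives only a non-quantitative gap $g(\delta)$, but that suffices for consistency. It also needs neither the conditioning of $\latentData_c^\star$ nor an external perturbation theorem. The paper's route pays for those tools with explicit quadratic constants. These pair with the upper margin $C_{\mathrm{up}}d_{\cG}^2$ to fix the inner radius $a_0\delta$; your Lipschitz bound, linear in $d_{\cG}$, also works for that step.
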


%\textcolor{red}{[AMS: We probably want to mention the orbit distance early on, with a forward-reference to the definition; at the moment we never say what the symbol is, though I can see the first sentence of the remark is supposed to be about this. The remark itself can probably be shortened a little. ]}

\begin{remark}
    Theorem \ref{thm:NoisyPC} applies directly to the deterministic Barnes--Hut algorithm ($\xi = \xi_0$, Section~\ref{sec:BH-algorithms}, Algorithm~\ref{alg:deter}) by choosing a sequence $\theta = \theta_n \to 0$. More subtly, the result also extend to the noisy quadtree algorithm (Section~\ref{sec:stochBH}, Algorithm~\ref{alg:stoch}) for sufficiently small noise. Because the target distribution is supported on a compact state space and possesses a density strictly bounded away from both zero and infinity, utilizing any standard proposal distribution with a similarly bounded density ensures that the reference Metropolis-Hastings chain is uniformly ergodic. Furthermore, the one-step total variation distance $\Delta_n$ is governed by the logistic slope $c$ in the noisy traversal \Eqref{eq::logfunc}. By letting $c \to \infty$, $\Delta_n$ can be made arbitrarily small, fulfilling the final theoretical assumptions.
\end{remark}

\paragraph{Proof Sketch} The full proof, detailed in Appendix \ref{app:PCproof}, proceeds in two main stages. First, we establish consistency for the deterministic reference ($\xi_0$). We demonstrate that the baseline empirical risk uniformly concentrates around the expected "population" risk. Because this population risk is strictly minimized at the true configuration $\latentData^\star$ (up to isometry), the mass of the stationary measure must concentrate around $\latentData^\star$ in orbit distance. Second, we apply perturbation arguments to show this consistency is preserved under the noisy quadtree algorithm. 

\subsection{Bayesian computation} \label{sec:BayesianComp}
% alg 4 - MCMC sampler
\begin{wrapfigure}[21]{r}{0.45\textwidth}
\vspace{-20pt}
    \begin{minipage}[t]{\linewidth}
        \begin{algorithm}[H]
        \footnotesize
        \caption{BH-BMDS via \\ Metropolis-within-Gibbs}\label{alg:MCMC}
        \KwIn{$\highdata_1,\dots, \highdata_n \in \RR^\Omega$}
        \KwOut{posterior samples for $\latentdata_1,\dots,\latentdata_n \in \RR^2$}
        
        Initialize $\latentData^0 = \{\latentdata_i^0 \}^{n}_{i = 1} \subset \RR^2$ via dimension reduction (e.g., MDS or PCA)\;
        Build quadtree $\mathcal{T}$ on $\latentData^0$\;
        
        \For{each MCMC iteration}{
            Permute indices $1,\dots,n$\;
            \For{$i = 1,\dots,n$}{
                Compute current $\tilde{\ell}(\latentdata_i \mid \mathcal{T})$ using Algorithm~\ref{alg:stoch}\;
                Propose $\latentdata_i' \sim N(\latentdata_i, \tau_i^2 \mb I)$\;
                Update $\mathcal{T}$ locally using Algorithm~\ref{alg:QT_updates}\;
                Compute proposed $\tilde{\ell}(\latentdata_i' \mid \mathcal{T})$ using Algorithm~\ref{alg:stoch}\;
                Accept/reject via MH criterion \Eqref{eq:MHAR}\;
                \If{rejected}{
                Restore cached node summaries in $\mathcal{T}$\;
                }
                \Else{
                Keep updated $\mathcal{T}$\;
                }
            }
            Compute $\tilde{\ell}(\distanceMatrix \mid \latentData, \sigma^2, \mathcal{T})$ using Algorithm~\ref{alg:deter}\;
            Propose $1/\sigma^{2\,\prime} \sim N_{(0, \infty)}(1/\sigma^{2(s)}, \tau_{1/\sigma^{2}}^{2(s)})$\;
            Compute $\tilde{\ell}(\distanceMatrix \mid \latentData, \sigma^{2\,\prime}, \mathcal{T})$ using Algorithm~\ref{alg:deter}\;
            Accept/reject via MH criterion \Eqref{eq:MHAR}\;
            Rebuild $\mathcal{T}$\
        }
        \end{algorithm}
    \end{minipage}
\end{wrapfigure}
To infer the model parameters, we implement a Metropolis-within-Gibbs (MwG) sampler (Algorithm~\ref{alg:MCMC}) targeting the posterior distribution
\begin{align}\label{eq:postdist}
    p(\latentData, \sigma^2 \mid \mathcal{T})
    \propto
    \tilde{\ell}(\distanceMatrix \mid \latentData,  \sigma^2, \mathcal{T}) \pi(\latentData)\,\pi(\sigma^2),
\end{align}
where $\pi(\cdot)$ is the prior specification. 

The sampler proceeds by sequentially updating each model parameter $\gamma = (\latentdata_1,\dots, \latentdata_n, \sigma^{2})$ using Metropolis--Hastings (MH) steps within a Gibbs framework. At iteration $s$ and for a generic parameter $\gamma_k$ with proposal $\gamma_k' \sim q(\cdot \mid \gamma_k^{(s)})$, the acceptance probability is
\begin{align}\label{eq:MHAR}
    \alpha(\gamma_k' \mid \gamma^{(s)}, \mathcal{T}) = \nonumber \\ \min \left\{1,
    \frac{
        p(\gamma_k', \gamma_{-k}^{(s)} \mid \mathcal{T})\;
        q(\gamma_k^{(s)} \mid \gamma_k')
    }{
        p(\gamma_k^{(s)}, \gamma_{-k}^{(s)} \mid \mathcal{T})\;
        q(\gamma_k' \mid \gamma_k^{(s)})
    } \right\},
\end{align}
where $\gamma_{-k}$ denotes all parameters except $\gamma_k$.
We propose a candidate latent location from 
$\latentdata_n' \sim N\!\big(\latentdata_n^{(s)}, \tau_{n}^{2(s)} \mb I\big)$
and a candidate precision from a truncated normal proposal,
$1/\sigma^{2\,\prime} \sim N_{(0, \infty)}(1/\sigma^{2(s)}, \tau_{1/\sigma^{2}}^{2(s)})$.

To efficiently evaluate the posterior, we implement different Barnes--Hut approximations depending on the parameter being updated. Latent locations use Algorithm~\ref{alg:stoch} which evaluates a pointwise contribution to the BH-BMDS likelihood via a stochastic tree traversal. Noisy Barnes--Hut mitigates discontinuities in the likelihood approximation that arise when small changes in $\latentdata_i$ cause crossings of the deterministic Barnes--Hut acceptance boundary $\mathbb{I}(\rho_{ik} < \theta)$. By replacing this hard threshold with a probabilistic acceptance rule \Eqref{eq::logfunc}, the resulting likelihood approximation varies more smoothly as a function of $\latentdata_i$ which increases the stability of Metropolis-Hastings updates. In contrast, the variance parameter $\sigma^{2}$ uses deterministic Barnes--Hut to approximate the full BMDS likelihood (Algorithm~\ref{alg:deter}). This strategy avoids introducing additional noise into likelihood evaluations, thereby improving numerical stability.

Under this MwG sampler, for fixed quadtree $\mathcal{T}$, a single latent location update requires $O(\log n)$ expected time, and a full sweep over all locations costs $O(n \log n)$. We assume proposals are sufficiently local such that $\latentdata_i$ and $\latentdata_i'$ lie in the same quadtree path, so that only center-of-mass updates are required. While running experiments on data (Section \ref{sec:results}), we found that this assumption holds true, and within a sweep, points' cell allocations do not change. At the end of each sweep, the quadtree $\mathcal{T}$ is rebuilt using the updated latent locations, which also requires $O(n \log n)$ time. The cost of updating $\sigma^{2}$ is of the same order as a full sweep.

\section{Experiments} \label{sec:results}
We conduct experiments on different datasets to evaluate the performance of Barnes--Hut BMDS. All algorithmic results arise from single-core CPU implementations carried out on a single Linux machine with two 26-core Intel Xeon Gold processors (2.1 GHz). Tree related data structures and likelihood computations are implemented in \texttt{C} while MCMC samplers are implemented in \texttt{Python}. For visualization, we use the \texttt{ggplot2} \citep{ggplot2} package in \texttt{R} \citep{Rcite}.

\subsection{Data}
We describe each of the four datasets that span structured mobility data, text embeddings, image representations and neural activity recordings, allowing us to evaluate BH-BMDS across diverse modalities. We also explain our preprocessing step for defining high-dimensional inputs which are mapped to two-dimensional latent representations for BH-BMDS. Dimensions are chosen to balance information retention and computational efficiency. 

\paragraph{Influenza} This dataset consists of air traffic-dervied pairwise "effective distances" between $n = 5{,}392$ taxa. The taxa belong to four influenza subtypes: $1{,}370$, $1{,}389$, $1{,}393$ and $1{,}240$ observations of H1N1, H3N1, Victoria and Yamagata, respectively. Effective distances inversely measure the probability of traveling between airports, such that shorter distances correspond to more heavily trafficked routes. Compared to geometric distances, effective distances better capture the structure of the global mobility network and are therefore more appropriate for modeling disease spread \citep{brockhelbing2013}. Since multiple taxa are associated with the same country, the dataset exhibits strong clustering, with only $66$ unique inter-country distances. As preprocessing, we apply classical MDS to the dissimilarity matrix to obtain Euclidean embeddings $\highdata_i \in \RR^6$.

\paragraph{ArXiv} This dataset consists of $n = 10{,}000$ abstracts collected from arXiv, each labeled with one of five subject areas: math (\texttt{math}), astrophysics (\texttt{astro-ph}), optics (\texttt{physics.optics}), quantum physics (\texttt{quant-ph}) and computer science / machine learning (\texttt{cs/ml}). Each abstract is embedded into 768-dimensional vector $\mathbf{w}_i$ using \texttt{SentenceTransformers} \citep{sentence-trans} under the \texttt{all-mpnet-base-v2} model \citep{song2020mpnet} in \texttt{Python}. This large language model captures semantic similarity at the sentence level. We compute the pairwise cosine dissimilarities, e.g., $1 - \cos(\mathbf{w}_i, \mathbf{w}_{j}) = 1 - \frac{\mathbf{w}_i \cdot \mathbf{w}_{j}}{\|\mathbf{w}_i\| \|\mathbf{w}_{j}\|}$ for $\mathbf{w}_i \in \RR^{768}$ between embeddings and then apply classical MDS to obtain Euclidean embeddings $\highdata_i \in \RR^{50}$.

\paragraph{MNIST} This dataset is a random subset $n = 10{,}000$ of grayscale handwritten digit images from the MNIST dataset, where each image has dimension $d = 28 \times 28 = 784$ with pixel intensity in $[0,1]$ and labels from 0 to 9. To extract meaningful features, we employ a convolutional autoencoder that maps each image to a 30-dimensional latent representation $\highdata_i \in \RR^{30}$. The model is trained to minimize pixel-wise reconstruction loss using the Adam optimizer in \texttt{PyTorch} \citep{pytorch}. The resulting 30-dimensional embeddings serve as the high-dimensional input for BH-BMDS.

\paragraph{Neural activity} Data come from experiments reported in \cite{alz_expt}. This dataset comprises local field potential (LFP) recordings from 32-channel probes implanted in the CA1 hippocampus and medial entorhinal cortex of healthy and ThyTau22 mice. Recordings were captured over six months during a goal-oriented virtual reality task that lasted 15 minutes. We employ a pre-trained MaGNet GNN \citep{magnet}--trained to decode behavioral states (stalled vs.~moving)--to generate high-dimensional latent embeddings $\highdata_i \in \RR^{51}$, independently across mouse types and weeks. We use BH-BMDS to project these embeddings into a low-dimensional space to evaluate manifold interpretability and identify progressive signs of dementia.

\subsection{Experimental setup}\label{sec:setup}
For our simulations, we assume that our posterior distribution \Eqref{eq:postdist} is specified with priors $\latentdata_1,\dots,\latentdata_n \stackrel{\text{i.i.d.}}{\sim} N(\mb 0, \mb I_2)$ and $1/\sigma^{2} \sim \text{Gamma}(1,1),$
where $1/\sigma^{2}$ is the BMDS precision parameter. The proposal variances $\tau_{i}^{2(s)}$ and $\tau_{\sigma^{-2}}^{2(s)}$ are adaptively tuned to maintain target acceptance rates, thereby facilitating efficient exploration of the posterior. Let $\hat{\alpha}$ denote the empirical acceptance rate over a fixed adaptation window and $\alpha$ the target acceptance rate. Define $R = \hat{\alpha}/\alpha$, truncated to lie in $[0.5, 2]$. After each adaptation window, the proposal variances are updated via $\tau^{(s+1)} = R \tau^{(s)}$. Adaptation is performed every 50 MCMC iterations. We set $\theta = 2$ for all experiments as it empirically provides a favorable trade-off between accuracy and computational efficiency.

\subsection{Accuracy} \label{sec:acc}
Building on the theoretical guarantees for consistent posterior behavior (Theorem~\ref{thm:NoisyPC}), we empirically evaluate BH-BMDS across four datasets. We visualize the posterior Procrustes-aligned mean embeddings using every 50th MCMC iteration. Across all benchmarks (Figures~\ref{fig:flu_all}--\ref{fig:alz_all}), BH-BMDS closely recovers the clustering structure of the exact BMDS, with shared labels consistently co-located. We observe minor coordinate shrinkage toward the origin, a known effect of Gaussian priors when the likelihood approximation introduces additional variance. Notably, exact BMDS results are only provided for smaller subsets where computation is feasible; for larger $n$, exact inference is prohibitive. These results demonstrate that BH-BMDS provides a faithful, scalable approximation to the full Bayesian formulation.

\begin{figure}
    \centering
    \includegraphics[width=\linewidth]{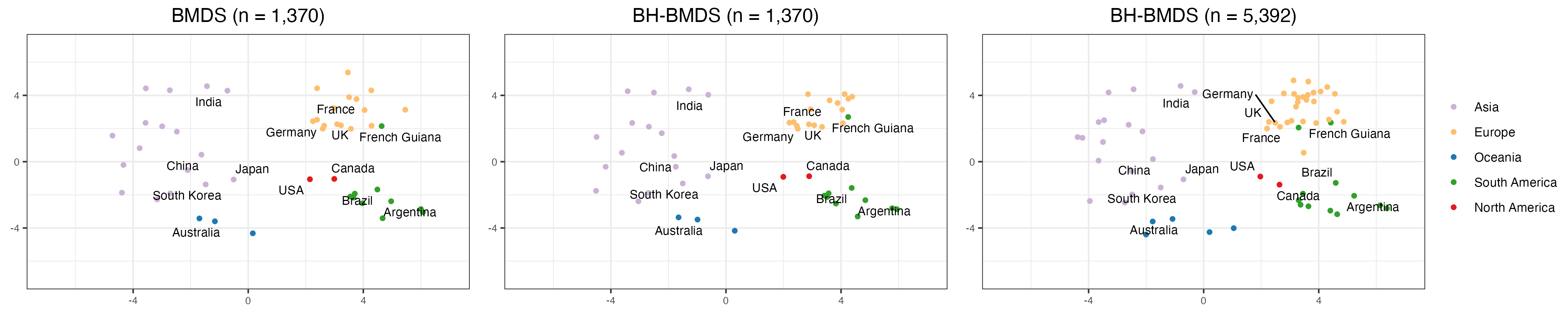}
    \caption{Posterior Procrustes-aligned means across MCMC iterations and taxa for only H1N1 ($n = 1{,}370$) and all influenza subtypes ($n = 5{,}392$) under full Bayesian multidimensional scaling (BMDS) and Barnes--Hut Bayesian multidimensional scaling (BH-BMDS) frameworks.}
    \label{fig:flu_all}
\end{figure}

\begin{figure}
    \centering
    \includegraphics[width=\linewidth]{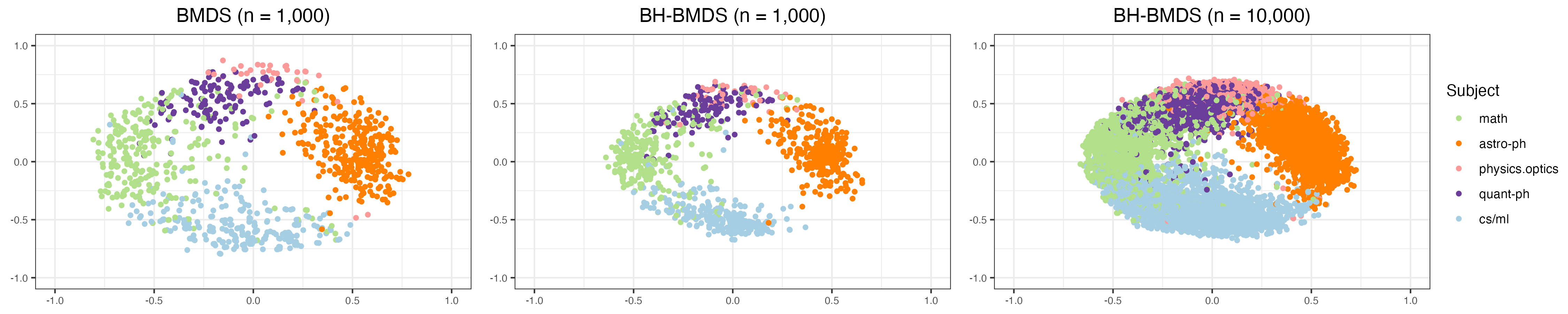}
    \caption{Posterior Procrustes-aligned means across MCMC iterations for $1{,}000$ and $10{,}000$ embedded ArXiv abstracts under full Bayesian multidimensional scaling (BMDS) and Barnes--Hut Bayesian multidimensional scaling (BH-BMDS) frameworks. Figures \ref{fig:mnist_all} and \ref{fig:alz_all} show analogous results for the MNIST and neural activity datasets.}
    \label{fig:arxiv_all}
\end{figure}

\subsection{Computational Efficiency}\label{sec:compeff}
We evaluate efficiency using effective sample size (ESS) per hour, computed via the \texttt{coda} \citep{coda} package in \texttt{R}.
ESS inversely reflects the degree of autocorrelation in MCMC samples, with higher values indicating more efficient exploration of the posterior. We calculate ESS using thinned MCMC samples (retaining every 50th iteration) on a random subset of 1,000 pairwise latent distances to avoid identifiability issues inherent in raw locations.

Figure \ref{fig:eff_small} compares the computational efficiency across three inference frameworks: Hamiltonian Monte Carlo (HMC) and MwG for exact BMDS and Algorithm \ref{alg:MCMC} for BH-BMDS and five datasets: H1N1, ArXiv, MNIST and the ThyTau22 and Wild-Type mouse at Week 9. HMC efficiently explores the posterior by leveraging gradient information but incurs additional computational cost due to gradient evaluations \citep{neal2012}. In our implementation, HMC jointly samples latent locations under the full BMDS likelihood using 20 leapfrog steps. BMDS and BH-BMDS follow Algorithm \ref{alg:MCMC} with $\theta = 0$ and $\theta = 2$, respectively. For BH-BMDS, we repeat each experiment four times to account for variability induced by the quadtree-based likelihood approximation. As shown in Figure \ref{fig:eff_small}, BH-BMDS achieves approximately 60-fold speedup relative to HMC and 9-fold speedup relative to BMDS. Gains are most pronounced in datasets with distinct clustering; for instance, the ThyTau22 cohort shows slightly lower relative efficiency than Wild-Type due to more diffuse posterior mass (Figure \ref{fig:alz_all}).

As exact BMDS is infeasible at scale, we extrapolate its performance to the full datasets, generously assuming idealized sampling efficiency by setting $ESS = S$ for $S$ retained (thinned) samples. For the influenza dataset, we use observed BMDS runtimes; for the ArXiv and MNIST datasets, we estimate BMDS runtime at $n = 10{,}000$ via a quadratic regression on benchmarked runtimes ($n \in \{100, 500, 1{,}000$ and $5{,}000\}$). Even with this optimistic assumption, BH-BMDS achieves substantial gains: approximately 27-fold speedup for influenza, 67-fold for MNIST and 93-fold for ArXiv (Figure \ref{fig:eff_big}). These gains increase with $n$, consistent with the improved scaling behavior of the BH-BMDS approximation.

\begin{figure}
    \centering
    \includegraphics[width=\linewidth]{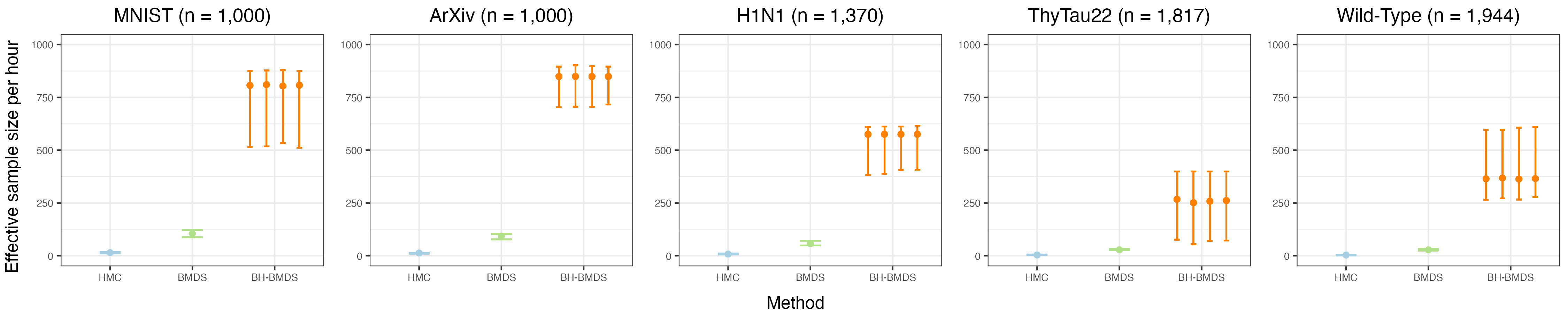}
    \caption{95\% empirical intervals of computational efficiency for H1N1, ArXiv, MNIST and ThyTau22 and Wild-Type cohorts at Week 9 datasets. Computational efficiency is measured as the effective sample size (ESS) per hour, in which ESS is a function of asymptotic auto-correlation. We compare efficiency across three frameworks: 1) BMDS log-likelihood and gradient using HMC (HMC), 2) BMDS log-likelihood using MwG (BMDS) and 3) BH-BMDS log-likelihood using MwG (BH-BMDS, Algorithm \ref{alg:MCMC}). ThyTau22 has a lower ESS per hour compared to Wild-Type for similar number of points due to weaker clustering.}
    \label{fig:eff_small}
\end{figure}

\begin{figure}
    \centering
    \includegraphics[width=\linewidth]{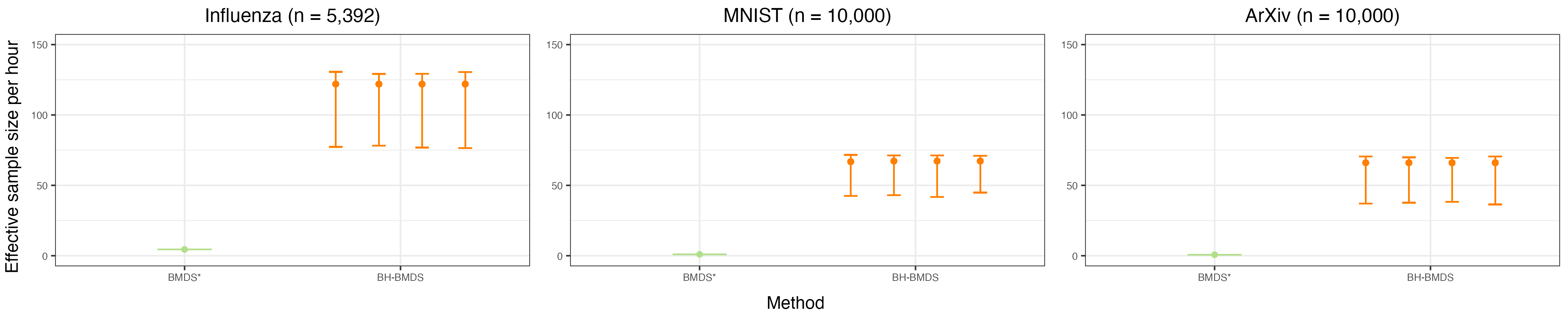}
    \caption{95\% empirical intervals of computational efficiency for influenza, ArXiv and MNIST datasets. Computational efficiency is measured as the effective sample size (ESS) per hour, in which ESS is a function of asymptotic auto-correlation. Since efficiency would be infeasible to compute for such large data under BMDS, we provide a theoretical value. Time at $n = 10{,}000$ is predicted from a quadratic regression using observed times at $n = 100, 500, 1{,}000, 5{,}000$ while ESS is generously assumed to be perfect. BMDS* equals 4.5, 1.0 and 0.7 for influenza, MNIST and ArXiv, respectively.}
    \label{fig:eff_big}
\end{figure}

\section{Discussion} \label{sec:discussion}
We propose a scalable approximation to the BMDS likelihood using Barnes--Hut algorithms (BH-BMDS), reducing computational complexity from $O(n^2)$ to $O(n \log n)$. By embedding this approximation within a Metropolis-within-Gibbs sampler, our approach enables efficient posterior inference while maintaining a probabilistic framework. Crucially, we use a novel approach (Section \ref{app:main_prop}) to establish consistency of the stationary measure for BH-BMDS and evaluate the proposed method on datasets with diverse clustering patterns. 

Several limitations suggest directions for future work. First, our inference scheme relies on coordinate-wise updates of latent locations, which may mix slowly in settings with strong dependence across points. We explored joint updates using HMC, discontinuous HMC and Metropolis--Hastings but encountered difficulties achieving both stable convergence and favorable scaling for larger $n$. More broadly, these findings highlight the challenges of combining discrete approximation schemes with MCMC methods. Second, the noisy Barnes--Hut approximation (Section \ref{sec:stochBH}) introduces stochasticity through the traversal criterion. While this improves mixing by mitigating abrupt changes in the approximated likelihood, it also complicates the theoretical scaling behavior and may deviate from ideal $O(n \log n)$ performance, although such deviations were not observed in practice. Developing more stable approximation schemes that preserve both accuracy and computational guarantees is an important direction for future work. Finally, as with standard BMDS, low-dimensional embeddings may not capture all relevant structure in complex datasets, particularly when restricted to two dimensions. Extending BH-BMDS to higher-dimensional embeddings or adaptive representations is a promising avenue for future work. 

%\subsubsection*{Broader Impact Statement}

%\subsubsection*{Author Contributions}

\subsubsection*{Acknowledgments}
This work was supported by the NSF (DMS 2152774 and DMS 2236854) and NSERC (RGPIN-2022-03012). We also thank Ryan O'Dell, a PhD student at the University of California, Los Angeles, for implementing MaGNet \citep{magnet} and providing the high-dimensional embeddings on the neural activity recordings.

\bibliography{ref}
\bibliographystyle{tmlr}

\appendix
\section{Consistency of approximate BMDS}\label{app:PCproof}
\setcounter{equation}{0}
\counterwithin{equation}{section}
\setcounter{theorem}{0}
\counterwithin{theorem}{section}
\setcounter{corollary}{0}
\counterwithin{corollary}{section}
\setcounter{proposition}{0}
\counterwithin{proposition}{section}
\setcounter{remark}{0}
\counterwithin{remark}{section}
\setcounter{definition}{0}
\counterwithin{definition}{section}
\setcounter{assumption}{0}
\counterwithin{assumption}{section}

We restate and then prove Theorem~\ref{thm:NoisyPC} from the main document. 

We begin by recalling some basic notation from Section~\ref{sec:PostCons}. Throughout Appendix~\ref{app:PCproof}, fix $d\ge 1$, $\sigma>0$, and $M\ge\sqrt d$. We use the shorthand $N=\binom n2$, $\mb J= \mb I-n^{-1}\1\1^\top$, and 
\begin{equation*}
r_{ij}^\star=\norm{\latentdata_i^\star-\latentdata_j^\star},
\qquad
r_{ij}(\latentData)=\norm{\latentdata_i-\latentdata_j}.
\end{equation*}

Our overall goal is to infer the latent positions $\latentdata_{1}^{\star}, \latentdata_{2}^{\star}, \ldots, \latentdata_{n}^{\star} \in [0,1]^{d}$, which we write in matrix form as $\latentData^{\star}$ given the observed dissimilarity matrix $\tilde{D}_{ij} \in [0, M]$. For any matrix $\latentData$ or the actual latent positions $\latentData^{\star}$, define the centered configurations $\latentData_c=\mb J\latentData$ and $\latentData_c^\star=\mb J\latentData^\star$.

Relative to the reference measure
\begin{align}\label{app:eq:refmeasure}
\nu=\delta_0+\lambda|_{(0,M)}+\delta_M
\end{align}
on $[0,M]$, the conditional law of $\tilde D_{ij}$ given $r_{ij}^\star=r$ has density
\begin{align}\label{eq:cappedDensity}
q_{M,r}(y)=
\begin{cases}
\Phi(-r/\sigma), & y=0,\\[2mm]
\dfrac{1}{\sigma}\phi\!\left(\dfrac{y-r}{\sigma}\right), & 0<y<M,\\[2mm]
1-\Phi\!\left(\dfrac{M-r}{\sigma}\right), & y=M,
\end{cases}
\end{align}
where $\lambda$ is the Lebesgue measure and $\phi,\Phi$ denote the standard normal density and CDF, respectively. 

We also recall the following notation from Section~\ref{sec:PostCons}: the loss $\ell_M(y,r)=-\log q_{M,r}(y)$ \Eqref{eq:loss}, the empirical risk $R_n(\latentData)=N^{-1}\sum_{i<j}\ell_M(\tilde D_{ij},r_{ij}(\latentData))$ \Eqref{eq:empRisk}, and for an algorithmic realization $\xi$, the surrogate risk $\widehat R_n^{(\xi)}(\latentData)=N^{-1}\sum_{i<j}\widehat\ell_{ij}^{(\xi)}(\latentData)$ \Eqref{eq:approxRisk}. These yield the exact posterior
\begin{align} \label{app:eq:exactPosterior}
    \Pi_n(d\latentData\mid\mb D)\propto \Pi_0(d\latentData)\exp\{-N R_n(\latentData)\}
\end{align} and the surrogate posterior 
\begin{align} \label{app:eq:approxPosterior}
    \widehat\Pi_n^{(\xi)}(d\latentData\mid\mb D)\propto\Pi_0(d\latentData)\exp\{-N\widehat R_n^{(\xi)}(\latentData)\}.
\end{align}

Since distances are invariant under Euclidean isometries, the posterior distribution cannot possibly be consistent under the Euclidean metric. Instead, we prove consistency in the following "orbit distance" obtained by taking the Frobenius distance $\norm{\mb A}_F=(\sum_{i,j}A_{ij}^2)^{1/2}$ and "modding out" by isometries:

\begin{definition}[Orbital distance]\label{rmk1:OD}
Let $\cG = \{(\mb Q, \mb b): \mb Q\in \cO(d),\, \mb b\in \RR^d\}$ act on configurations via $(\mb Q, \mb b)\cdot \latentdata_i = \mb Q\latentdata_i + \mb b$. Define the orbit distance
\begin{align}\label{app:eq:ProcrustesMetric}
d_{\cG}(\latentData, \latentData^\star)\ =\ \inf_{(\mb Q, \mb b)\in\cG}\ \frac{1}{\sqrt n}\norm{\latentData-(\latentData^\star \mb Q+\1 \mb b^\top)}_F\ =\ \inf_{\mb Q\in\cO(d)}\ \frac{1}{\sqrt n}\norm{\latentData_c-\latentData_c^\star \mb Q}_F.
\end{align}
\end{definition}

Before proving our main result (Theorem \ref{thm:NoisyPC}), we will first establish posterior consistency for the baseline algorithm with parameter $\xi = \xi_0$. For notational simplicity, we drop the algorithmic superscript until Section \ref{SecFinalAppA} and let $\widehat{\ell}_{ij}\equiv\widehat{\ell}_{ij}^{(\xi_0)}$, $\widehat R_n\equiv\widehat R_n^{(\xi_0)}$, and $\widehat{\Pi}_n \equiv \widehat{\Pi}_n^{(\xi_0)}$ denote the surrogate quantities under the baseline.

The key assumption is that the surrogate likelihoods have errors going to 0:

\begin{assumption}[Small Approximation Error]\label{app:A:error}
There exists a deterministic sequence $\varepsilon_n\downarrow 0$ such that, with probability $\to 1$ under the joint law of $(\latentData^\star,\mb D)$,
\begin{align}\label{app:eq:pairwiseError}
\sup_{\latentData\in([0,1]^d)^n}\ \max_{1\le i<j\le n}\ \Big|\widehat\ell_{ij}(\latentData)-\ell_M(\tilde D_{ij},r_{ij}(\latentData))\Big|
\ \le\ \varepsilon_n.
\end{align}
\end{assumption}

Note that, on the high-probability event \Eqref{app:eq:pairwiseError}, the error in the risk is also small:
\begin{align}\label{app:eq:riskError}
\sup_{\latentData\in([0,1]^d)^n}\big|\widehat R_n(\latentData)-R_n(\latentData)\big|\le \varepsilon_n.
\end{align}

We also restate the assumptions about the model:

\begin{assumption}[Nice Data-Generating Process]\label{app:A:model}
The data generating process and prior satisfy:
\begin{enumerate}[label=(T\arabic*)]
\item \textbf{(Random design)} $\latentdata_1^\star,\dots,\latentdata_n^\star \stackrel{\text{i.i.d.}}{\sim}\mathrm{Unif}([0,1]^d)$.
\item \textbf{(Independent pair noise)} Conditional on $\latentData^\star$, the noises $\{Z_{ij}\}_{i<j}$ are i.i.d.\ $N(0,1)$.
\item \textbf{(Prior regularity)} The prior $\Pi_0$ is supported on $([0,1]^d)^n$ and has density $\pi_0$ satisfying $e^{-cn}\le \pi_0(\latentData)\le e^{cn}$ on $([0,1]^d)^n$, for some constant $c>0$.
\end{enumerate}
\end{assumption}

For the remainder of Appendix~\ref{app:PCproof}, all probabilistic statements involving the data-generating law are understood under Assumption~\ref{app:A:model}, unless otherwise stated. Statements about the baseline surrogate posterior are understood in addition under Assumption~\ref{app:A:error} or the high-probability event \Eqref{app:eq:riskError}, as indicated.

\begin{proposition}[Consistency of the Baseline Algorithm]\label{app:main_prop}
Fix $d\ge 1$, $\sigma>0$, and $M\geq \sqrt d$. Consider Model \Eqref{eq:bmds_model} where $r_{ij} = r_{ij}^\star$, the exact posterior \Eqref{app:eq:exactPosterior} based on $\ell_M$ in \eqref{eq:loss}, and the approximate posterior $\widehat{\Pi}_n$ \Eqref{app:eq:approxPosterior}. Suppose Assumptions~\ref{app:A:error} and \ref{app:A:model} hold. Then, for every $\delta>0$,
\begin{align}
    \widehat \Pi_n\Big(\big\{\latentData:\ d_{\cG}(\latentData,\latentData^\star)>\delta\big\}\ \Big|\ \mb D\Big)\ \xrightarrow{\Pbb}\ 0.
\end{align}
In particular, the approximate posterior of the baseline algorithm is consistent for the latent configuration up to rigid motion.
\end{proposition}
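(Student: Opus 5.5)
The plan is the standard ``Gibbs posterior'' argument: an upper bound on the posterior mass of the bad set together with a lower bound on the normalizing constant, both expressed through the risk. We work on the high-probability event where \Eqref{app:eq:riskError} holds. On that event, $\widehat\Pi_n(A\mid\mb D)\le e^{2N\varepsilon_n}\,\Pi_n(A\mid\mb D)$ up to the ratio of normalizers, and the same comparison holds for them. It therefore suffices to control the exact-risk quantity
\begin{align*}
\frac{\int_{A_\delta}\pi_0 e^{-NR_n}}{\int\pi_0 e^{-NR_n}}\, e^{4N\varepsilon_n},
\qquad A_\delta=\{\latentData: d_{\cG}(\latentData,\latentData^\star)>\delta\}.
\end{align*}
Define the conditional population risk $\bar R_n(\latentData)=N^{-1}\sum_{i<j}\E[\ell_M(\tilde D_{ij},r_{ij}(\latentData))\mid\latentData^\star]$. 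Its excess over $\bar R_n(\latentData^\star)$ is a sum of Kullback--Leibler divergences $\mathrm{KL}(q_{M,r^\star_{ij}}\,\|\,q_{M,r_{ij}})$. Since $r,r^\star\in[0,\sqrt d]$, $M$ is fixed and $\sigma>0$, the map $r\mapsto q_{M,r}$ is smooth and identifiable. Hence this KL divergence is bounded below by $c_0(r_{ij}-r^\star_{ij})^2$ and bounded above by $C_0(r_{ij}-r_{ij}^\star)^2$, with constants depending only on $d,\sigma,M$.

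\textbf{Step 1: uniform concentration.} We show $\sup_{\latentData}|R_n(\latentData)-R_n(\latentData^\star)-(\bar R_n(\latentData)-\bar R_n(\latentData^\star))|\to0$ in probability. The losses $\ell_M(y,r)$ are bounded and Lipschitz in $r$ on $[0,M]\times[0,\sqrt d]$; the point masses at $0$ and $M$ and the Gaussian quadratic term are bounded because $y\le M$. Conditional on $\latentData^\star$, the centered sum is a sum of $N$ independent bounded terms. We take an $\eta$-net of $([0,1]^d)^n$ of cardinality $e^{O(n\log(1/\eta))}$ and apply Hoeffding at each net point, which gives deviations $O(\sqrt{n\log(1/\eta)/N})\to0$. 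Lipschitzness in $r_{ij}$ handles the net error.

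\textbf{Step 2: separation.} This step converts small average squared distance error into small orbit distance, and I expect it to be the main obstacle: we need $\inf_{A_\delta}N^{-1}\sum_{i<j}(r_{ij}-r^\star_{ij})^2\ge\gamma(\delta)>0$ with probability $\to1$. I would route this through the Gram matrix. First, uniform boundedness gives $N^{-1}\sum(r_{ij}^2-r_{ij}^{\star2})^2\le 4d\,N^{-1}\sum(r_{ij}-r^\star_{ij})^2$. Second, by the double-centering identity $\mb B=-\tfrac12\mb J(r_{ij}^2)\mb J$, one has $n^{-2}\|\latentData_c\latentData_c^\top-\latentData^\star_c\latentData_c^{\star\top}\|_F^2\lesssim N^{-1}\sum(r_{ij}^2-r^{\star2}_{ij})^2$. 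Third, a Procrustes perturbation bound (e.g.\ Arias-Castro et al.-type) gives
\begin{align*}
\min_{\mb Q}n^{-1}\|\latentData_c-\latentData^\star_c\mb Q\|_F^2\lesssim \frac{n^{-2}\|\latentData_c\latentData_c^\top-\latentData_c^\star\latentData_c^{\star\top}\|_F^2}{\lambda_{\min}(n^{-1}\latentData_c^{\star\top}\latentData_c^\star)}.
\end{align*}
The denominator tends to $1/12$ by the law of large numbers under (T1), which supplies the needed nondegeneracy.

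\textbf{Step 3: assembling the bound.} For the numerator, on $A_\delta$ we have $N(R_n-R_n(\latentData^\star))\ge N(c_0\gamma(\delta)-o(1))$, so the numerator is at most $e^{cn}e^{-NR_n(\latentData^\star)}e^{-N(c_0\gamma-o(1))}$. For the denominator we restrict to the neighbourhood $\{\max_i\|\latentdata_i-\latentdata_i^\star\|\le\eta\}$. It has prior mass at least $e^{-cn}(c\eta)^{dn}$, and on it $R_n-R_n(\latentData^\star)\le C_0\cdot4\eta^2+o(1)$ by Step 1 and the upper KL bound. We choose $\eta$ small relative to $\gamma(\delta)$, and use that $n\log(1/\eta)$, $cn$ and $N\varepsilon_n$ are all $o(N)$ once $\varepsilon_n\to0$. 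The ratio is then bounded by $\exp\{-N(c_0\gamma(\delta)/2)+o(N)\}\to0$. Combining this with the probability $\to1$ of the events in Assumption~\ref{app:A:error}, Step 1 and Step 2 yields the claim.
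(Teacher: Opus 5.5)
Your proposal is correct and follows essentially the same route as the paper: Hoeffding-plus-net uniform concentration of $R_n$ around its conditional mean, a two-sided quadratic KL bound on the per-pair excess risk, the Gram-matrix/Arias-Castro Procrustes step using a well-conditioned uniform design, and a Gibbs-ratio bound in which the prior-mass and $N\varepsilon_n$ terms are $o(N)$ while $N\asymp n^2$. Your law-of-large-numbers argument for $\lambda_{\min}$ and your sup-norm neighbourhood for the denominator are harmless simplifications of the paper's Vershynin bound and orbit-ball argument; the one step you state too loosely is the quadratic KL lower bound, since smoothness plus identifiability alone does not give it, so you need either a uniformly positive Fisher information on $[0,\sqrt d]$ with compactness, or the paper's Pinsker-plus-$\mu_M'(r)\ge \Phi(M/(2\sigma))-\tfrac12>0$ argument.
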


We restate Theorem~\ref{thm:NoisyPC} slightly more formally:

\begin{theorem}[Theorem~\ref{thm:NoisyPC}, Extended]\label{thm:NoisyPC_ext}
Fix $d\ge 1$, $\sigma>0$, $M\geq \sqrt d$ and a parameter space $\Xi$. Suppose Assumption~\ref{app:A:model} holds, and suppose there exists a reference parameter $\xi_0 \in \Xi$ such that the chain $P_{\xi_0}$ on $([0,1]^d)^n$ is uniformly ergodic and $\widehat{\ell}_{ij}^{(\xi_0)}$ satisfies Assumption~\ref{app:A:error}. For each $\xi\in\Xi$, let $P_{\xi}$ be a Markov chain on $([0,1]^d)^n$ with stationary measure $\widehat{\Pi}_{n}^{(\xi)}$, and define the mixture kernel $P=\int_{\Xi} P_{\xi}\,\mu(d\xi)$ for a probability measure $\mu$ on $\Xi$. Let $\widetilde\Pi_n$ be an invariant probability measure of $P$. 

Suppose that there exist deterministic constants $C_n<\infty$ and $\rho_n\in(0,1)$ such that
\begin{align}
\sup_{\latentData_0\in([0,1]^d)^n}\big\|\delta_{\latentData_0}P_{\xi_0}^m-\widehat\Pi_n^{(\xi_0)}\big\|_{\mathrm{TV}}
\le C_n\rho_n^m,
\qquad m\ge 0.
\end{align}
Set
\begin{align} \label{app:thm_lkdef}
\lambda_n=\max\left\{0,\left\lceil\frac{\log(1/C_n)}{\log(\rho_n)}\right\rceil\right\},
\qquad
\kappa_n=\lambda_n+\frac{C_n\rho_n^{\lambda_n}}{1-\rho_n},
\end{align}
and assume that the possibly data-dependent quantity
\begin{align}\label{app:thm_deltadef}
\Delta_n =
\sup_{\xi\in\Xi}\ \sup_{\latentData_0\in([0,1]^d)^n}
\big\|P_{\xi}(\latentData_0,\cdot)-P_{\xi_0}(\latentData_0,\cdot)\big\|_{\mathrm{TV}}
\end{align}
satisfies $\kappa_n\Delta_n\xrightarrow{\Pbb}0$.

Then for every $\delta>0$,
\begin{align}
\widetilde\Pi_n\Big(\big\{\latentData:\ d_{\cG}(\latentData,\latentData^\star)>\delta\big\}\Big)\xrightarrow{\Pbb}0.
\end{align}
\end{theorem}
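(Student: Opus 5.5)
The argument splits into a perturbation step and a consistency step. We take Proposition~\ref{app:main_prop} as given for the baseline surrogate posterior $\widehat\Pi_n\equiv\widehat\Pi_n^{(\xi_0)}$, which is the stationary measure of $P_{\xi_0}$. The theorem then follows once we show
\begin{align*}
\|\widetilde\Pi_n-\widehat\Pi_n^{(\xi_0)}\|_{\mathrm{TV}}\le \kappa_n\Delta_n .
\end{align*}
Indeed, for the set $A_\delta=\{\latentData: d_{\cG}(\latentData,\latentData^\star)>\delta\}$ this gives
\begin{align*}
\widetilde\Pi_n(A_\delta)\le \widehat\Pi_n^{(\xi_0)}(A_\delta)+\kappa_n\Delta_n .
\end{align*}
The first term tends to $0$ in probability by Proposition~\ref{app:main_prop}, and the second by hypothesis.

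\textbf{Perturbation bound.} This is the standard argument for perturbations of uniformly ergodic chains, in the style of Mitrophanov. First, since $P=\int P_\xi\,\mu(d\xi)$, convexity of the total variation norm gives, for every $\latentData_0$,
\begin{align*}
\|P(\latentData_0,\cdot)-P_{\xi_0}(\latentData_0,\cdot)\|_{\mathrm{TV}}\le \int \|P_\xi(\latentData_0,\cdot)-P_{\xi_0}(\latentData_0,\cdot)\|_{\mathrm{TV}}\,\mu(d\xi)\le\Delta_n .
\end{align*}
Measurability of $\xi\mapsto P_\xi$ is implicitly assumed so that $P$ is defined.

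Next, write $\pi=\widehat\Pi_n^{(\xi_0)}$ and $\tilde\pi=\widetilde\Pi_n$. Invariance gives $\tilde\pi=\tilde\pi P^m$ for every $m$. Telescoping yields
\begin{align*}
\tilde\pi-\pi=\tilde\pi P_{\xi_0}^m-\pi+\sum_{k=0}^{m-1}\tilde\pi P^{k}(P-P_{\xi_0})P_{\xi_0}^{m-1-k}.
\end{align*}
Since $\tilde\pi P^k=\tilde\pi$, each summand has the form $\tilde\pi(P-P_{\xi_0})P_{\xi_0}^{j}$.

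\textbf{Bounding the summands.} The signed measure $\eta=\tilde\pi(P-P_{\xi_0})$ has total mass zero and satisfies $\|\eta\|_{\mathrm{TV}}\le\Delta_n$. Applying $P_{\xi_0}^j$ to $\eta$ therefore contracts it by the Dobrushin coefficient, which is at most $\min\{1,C_n\rho_n^j\}$ up to the factor-2 convention for TV. Hence
\begin{align*}
\|\eta P_{\xi_0}^j\|_{\mathrm{TV}}\le\Delta_n\min\{1,C_n\rho_n^j\}.
\end{align*}
For the first term, $\|\tilde\pi P_{\xi_0}^m-\pi\|_{\mathrm{TV}}\le C_n\rho_n^m$, which vanishes as $m\to\infty$. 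Letting $m\to\infty$ and summing,
\begin{align*}
\|\tilde\pi-\pi\|_{\mathrm{TV}}\le\Delta_n\sum_{j\ge0}\min\{1,C_n\rho_n^j\}.
\end{align*}
Split the sum at $j=\lambda_n$. The terms with $j<\lambda_n$ contribute at most $\lambda_n$. For $j\ge\lambda_n$ we have $C_n\rho_n^{j}\le 1$ by the choice of $\lambda_n$, so the tail contributes at most the geometric sum $C_n\rho_n^{\lambda_n}/(1-\rho_n)$. The total is exactly $\kappa_n\Delta_n$.

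\textbf{Main obstacle.} The bookkeeping above is routine. The care is needed around data dependence: $\Delta_n$ and the invariant measure are random through $\mb D$, but the inequality holds pointwise for each data realization, so convergence in probability transfers directly. The real analytical weight sits in Proposition~\ref{app:main_prop}, which is assumed here. The only thing to confirm is that the TV normalization convention (a factor of 2 between $\sup_A$ and the $L^1$ norm) matches the definition of $\kappa_n$. Any discrepancy is an absolute constant and does not affect the conclusion.
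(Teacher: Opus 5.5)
Your proof is correct and follows the paper's route: bound the mixture's one-step deviation from $P_{\xi_0}$ by $\Delta_n$ via convexity, obtain $\|\widetilde\Pi_n-\widehat\Pi_n^{(\xi_0)}\|_{\mathrm{TV}}\le\kappa_n\Delta_n$, then conclude with Proposition~\ref{app:main_prop}. The difference is that you derive the Mitrophanov-type perturbation bound yourself (telescoping anchored at the invariant measure, Dobrushin contraction, and a split of the sum at $\lambda_n$); the paper instead cites the $m$-step bound $\sup_{\latentData_0}\|\delta_{\latentData_0}P^m-\delta_{\latentData_0}P_{\xi_0}^m\|_{\mathrm{TV}}\le\kappa_n\Delta_n$ from Alquier et al., integrates it against $\widetilde\Pi_n$, and lets $m\to\infty$ — and your factor-of-2 caveat about the TV normalization is accurate and does not affect the conclusion.
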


\subsection{Proof strategy for Proposition~\ref{app:main_prop}}

The basic idea is: 

\begin{enumerate}
    \item Check that the empirical risk $R_{n}$ concentrates near an associated "population" risk $R_{n}^{0}$ (Section \ref{app:SubsecUnifConc}),
    \item Check that the population risk $R_{n}^{0}(\latentData)$ grows quadratically in the \textit{orbit} (not Euclidean) distance to the truth $\latentData^{\star}$, with reasonable constant (Section \ref{app:SubsecQuadraticMargin}),
    \item Check that the posterior associated with the population risk is consistent (Section \ref{app:SecConsistencyPost}).
\end{enumerate}

We note that this approach is not the most common approach to analyzing the error of "approximate" or "surrogate" MCMC algorithms. Typically, one aims to have a surrogate that is pointwise close to the target log-likelihood, then use perturbation theory to show that the stationary measure is close to the posterior distribution of interest (e.g., \cite{RastelliEtAl} for a typical analysis, applied to a different MCMC algorithm for inferring latent positions). In the present paper, we allow the surrogate to be very far from the log-likelihood pointwise, as long as the error grows only sub-quadratically in the number of points $n$. 

We are aware of several other results that prove consistency of MCMC algorithms even as the pointwise error is allowed to grow, though only in the much simpler situation of i.i.d.~samples and a finite-dimensional set of parameters to be learned. The influential analysis of SGLD in \cite{TehThieryVollmer2016} is an early example, and this viewpoint is still being refined through works such as \cite{bieringer2026surrogate}.

We emphasize the following differences between our analysis and most or all previous work:

\begin{enumerate}
    \item Directly analyzing the risk is always more difficult than merely treating an approximate MCMC chain as a perturbation of an ideal chain, where more generic results are available \citep{Mitrophanov2005,AlquierFrielEverittBoland2016,MedinaAguayoLeeRoberts2016,RudolfSchweizer2018}.
    \item The current problem is effectively infinite-dimensional, since the number of latent positions to learn grows with the size of the dataset. For this reason, naive bounds based on, essentially, Taylor expansions and concentration do not work. Compare this to the more classical finite-dimensional analyses of stochastic-gradient MCMC in \cite{TehThieryVollmer2016}.
    \item The ground truth is only identifiable up to an isometry, so all estimates need to effectively "mod out" by this isometry.
\end{enumerate}

\subsection{A standard matrix inequality}

\begin{lemma} \label{app:LemmaOpNorm}
Define the linear "double-centering" operator $\mathcal{C}(\mb A)=\mb J \mb A \mb J$ on $\RR^{n\times n}$, where
$\mb J= \mb I-\frac1n \1\1^\top$ is the orthogonal projector onto $\1^\perp$. We have the operator norm bound
\begin{equation*}
\|\mathcal{C}\|_{F\to F}\le 1.
\end{equation*}
\end{lemma}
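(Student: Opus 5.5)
The bound follows because $\mathcal{C}$ is the composition of two Frobenius-contractive maps, left and right multiplication by the orthogonal projector $\mb J$. First we check that $\mb J$ is an orthogonal projector. It is symmetric, and
\begin{equation*}
\mb J^2=\mb I-\tfrac{2}{n}\1\1^\top+\tfrac{1}{n^2}\1(\1^\top\1)\1^\top=\mb J,
\end{equation*}
using $\1^\top\1=n$. Hence its eigenvalues lie in $\{0,1\}$, and its spectral norm satisfies $\|\mb J\|_{2}\le 1$.

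Next we use the standard inequality $\|\mb B\mb A\|_F\le\|\mb B\|_2\|\mb A\|_F$. To verify it, write $\|\mb B\mb A\|_F^2=\sum_j\|\mb B\mb a_j\|_2^2$, where $\mb a_j$ are the columns of $\mb A$, and bound each term by $\|\mb B\|_2^2\|\mb a_j\|_2^2$. For right multiplication we transpose, using that the Frobenius norm is transpose-invariant and that $\mb J$ is symmetric:
\begin{equation*}
\|\mb A\mb J\|_F=\|\mb J\mb A^\top\|_F\le\|\mb A\|_F.
\end{equation*}

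Chaining the two bounds gives
\begin{equation*}
\|\mb J\mb A\mb J\|_F\le\|\mb J\|_2\,\|\mb A\mb J\|_F\le\|\mb A\|_F
\end{equation*}
for every $\mb A\in\RR^{n\times n}$. This is exactly $\|\mathcal{C}\|_{F\to F}\le 1$.

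The same conclusion also follows more directly. Under the Frobenius inner product $\langle\mb A,\mb B\rangle=\mathrm{tr}(\mb A^\top\mb B)$, the map $\mathcal{C}$ is self-adjoint and idempotent. It is therefore an orthogonal projection on $\RR^{n\times n}$, and every orthogonal projection has operator norm at most one.

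No step is a real obstacle. The only point needing care is that the Frobenius bound is obtained by combining the spectral norm of $\mb J$ with the Frobenius norm of $\mb A$, rather than by naively multiplying Frobenius norms. The latter would give the useless factor $\|\mb J\|_F=\sqrt{n-1}$.
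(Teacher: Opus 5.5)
Your proof is correct and matches the paper's: both use that $\mb J$ is a symmetric idempotent with $\|\mb J\|_2\le 1$, together with $\|\mb U\mb A\mb V\|_F\le\|\mb U\|_2\|\mb A\|_F\|\mb V\|_2$. The paper cites that inequality, whereas you verify it column by column. Your alternative argument, that $\mathcal{C}$ is self-adjoint and idempotent under the Frobenius inner product and hence an orthogonal projection on $\RR^{n\times n}$, is also valid.
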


\begin{proof}
Note $\mb J^\top=\mb J$, $\mb J^2= \mb J$ and
$\|\mb J\|_2=1$. Using the standard inequality $\|\mb U \mb A \mb V\|_F\le \|\mb U\|_2\|\mb A\|_F\|\mb V\|_2$
(e.g., the chapter on matrix norms \cite{HornJohnson2013}),
we obtain for all $\mb A$ that
\begin{equation*}
\|\mb J \mb A\mb J\|_F \le \|\mb J\|_2^2 \|\mb A\|_F = \|\mb A\|_F,
\end{equation*}
so $\|\mathcal{C}\|_{F\to F}\le 1$.
\end{proof}

\subsection{Uniform concentration of the empirical risk} \label{app:SubsecUnifConc}

Define the conditional (given $\latentData^\star$) population risk
\begin{equation}\label{app:eq:popRisk}
R_n^0(\latentData)\ =\ \E\!\left[R_n(\latentData)\mid \latentData^\star\right]
\;=\;\frac1N\sum_{i<j} g\big(r_{ij}(\latentData); r_{ij}^\star\big),
\end{equation}
where, for $Z\sim N(0,1)$,
\begin{equation}\label{app:eq:popRiskPt}
g(r;r^\star)=\E\Big[\ell_M\big(T_M([r^\star+\sigma Z]_+),r\big)\Big].
\end{equation}

\begin{lemma}[Uniform  concentration]\label{app:lem:ULLN}
There exist constants $C_1=C_1(d,\sigma,M)>0$, $C_2=C_2(d,\sigma,M)>0$, and $c=c(d,\sigma,M)>0$ such that, for all $0<t<C_1$,
\begin{equation*}
\Pbb\Big(\ \sup_{\latentData\in([0,1]^d)^n}\big|R_n(\latentData)-R_n^0(\latentData)\big|\ >\ t\ \Big|\ \latentData^\star\Big)
\ \le\ 2\exp\Big(C_2\,nd\log\left(\frac{C_1}{t}\right)\ -\ c\,N t^2\Big).
\end{equation*}
\end{lemma}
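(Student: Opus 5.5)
The approach is the standard one for uniform laws of large numbers over a compact parameter set: bounded differences (Hoeffding) at each fixed configuration, then a union bound over a finite net, then a Lipschitz argument to pass from the net to all of $([0,1]^d)^n$. Conditionally on $\latentData^\star$, the capped observations $\tilde D_{ij}$, $i<j$, are independent by (T2). Fix a configuration $\latentData$. Then $R_n(\latentData)-R_n^0(\latentData)=N^{-1}\sum_{i<j}\big(\xi_{ij}-\E[\xi_{ij}\mid\latentData^\star]\big)$ with $\xi_{ij}=\ell_M(\tilde D_{ij},r_{ij}(\latentData))$, which is a normalized sum of $N$ independent terms.

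\textbf{Step 1 (boundedness and Lipschitz constant of the loss).} For $y\in[0,M]$ and $r\in[0,\sqrt d]$, with $M\ge\sqrt d$, each branch of $q_{M,r}(y)$ in \eqref{eq:cappedDensity} is bounded above and below by positive constants depending only on $(d,\sigma,M)$:
\begin{itemize}
\item $\Phi(-r/\sigma)\ge\Phi(-\sqrt d/\sigma)$;
\item $1-\Phi((M-r)/\sigma)\ge 1-\Phi(M/\sigma)$;
\item $\sigma^{-1}\phi((y-r)/\sigma)\ge\sigma^{-1}\phi(M/\sigma)$.
\end{itemize}
Hence $|\ell_M|\le B=B(d,\sigma,M)$. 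Each branch of $-\log q_{M,r}(y)$ is smooth in $r$ on the compact interval $[0,\sqrt d]$, so $r\mapsto \ell_M(y,r)$ is $L$-Lipschitz uniformly in $y$. For the Gaussian branch the derivative is $(r-y)/\sigma^2$. For the two atoms the derivative is a ratio of $\phi$ to $\Phi$ terms, bounded because the $\Phi$ terms are bounded below.

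\textbf{Step 2 (pointwise concentration).} By Hoeffding's inequality, for each fixed $\latentData$,
\[
\Pbb\big(|R_n(\latentData)-R_n^0(\latentData)|>t/2\mid\latentData^\star\big)\le 2\exp(-Nt^2/(8B^2)).
\]

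\textbf{Step 3 (net and Lipschitz transfer).} Take $\mathcal N_\eta$ to be an $\eta$-net of $([0,1]^d)^n$ in the max-over-points Euclidean norm, i.e.\ $\max_i\|\latentdata_i-\latentdata_i'\|\le\eta$. Gridding each coordinate gives $|\mathcal N_\eta|\le(\lceil\sqrt d/\eta\rceil)^{nd}$. If $\max_i\|\latentdata_i-\latentdata_i'\|\le\eta$, then $|r_{ij}(\latentData)-r_{ij}(\latentData')|\le2\eta$ by the triangle inequality. By Step 1, both $R_n$ and $R_n^0$ then move by at most $2L\eta$; the bound for $R_n^0$ follows because it is an expectation of the same Lipschitz losses. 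Choosing $\eta=t/(8L)$ makes the discretization error at most $t/2$. A union bound over the net gives
\[
\Pbb\Big(\sup_{\latentData}|R_n(\latentData)-R_n^0(\latentData)|>t\,\Big|\,\latentData^\star\Big)\le 2\exp\big(nd\log(C_1'/t)-Nt^2/(8B^2)\big),
\]
with $C_1'=16L\sqrt d$, say. Setting $C_1$ to be $C_1'$ and restricting to $t<C_1$ makes the logarithm positive, so the constants can be absorbed into the $C_2,c$ form of the statement.

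\textbf{Main obstacle.} The only real care is in Step 1: verifying the uniform two-sided bounds and the Lipschitz constant at the atoms $y=0$ and $y=M$. This is exactly where $M\ge\sqrt d$ and the compactness of the range of $r$ are used; without them $\log\Phi$ could blow up. Everything else is a routine $\varepsilon$-net argument. The entropy term $nd\log(1/t)$ is of order $n$, while the concentration exponent is of order $Nt^2\sim n^2t^2$, so the bound is nontrivial for $t\gg\sqrt{\log n/n}$.
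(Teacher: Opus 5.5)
Your proposal is correct and follows essentially the same route as the paper's proof. Both arguments use pointwise Hoeffding, justified by uniform two-sided bounds on $q_{M,r}$. They then take a union bound over an $\eta$-net in the max-over-rows metric with $\eta=t/(8L)$, and transfer to all configurations via the $2L$-Lipschitz bound on both $R_n$ and $R_n^0$. The only difference is cosmetic: you build the net by explicit coordinate gridding instead of citing a covering-number bound.
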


\begin{proof}
Fix $\latentData$. Conditional on $\latentData^\star$, the random variables
\begin{equation*}
Y_{ij}(\latentData)=\ell_M(\tilde D_{ij},r_{ij}(\latentData))
\end{equation*}
are independent across pairs $(i,j)$.

On $0<y<M$ we have $|y-r|\le M$, while at the atoms
\begin{equation*}
q_{M,r}(0)=\Phi(-r/\sigma)\ge \Phi(-\sqrt d/\sigma),
\qquad
q_{M,r}(M)=1-\Phi((M-r)/\sigma)\ge \Phi(-M/\sigma).
\end{equation*}
Thus, because $r\in[0,\sqrt d]$ and $y\in[0,M]$, the density $q_{M,r}(y)$ is uniformly bounded above and below by constants $0<q_-\le q_+<\infty$ depending only on $(d,\sigma,M)$. Hence
\begin{equation}\label{app:eq:Yij_bounded}
Y_{ij}(\latentData)\in[-\log q_+,-\log q_-]
\end{equation}
almost surely. By \eqref{app:eq:Yij_bounded}, writing $W_\ell=\log(q_+/q_-)$, Hoeffding's inequality gives
\begin{equation} \label{app:IneqHoeffStep1}
\Pbb\Big(\big|R_n(\latentData)-R_n^0(\latentData)\big|>t\ \Big|\ \latentData^\star\Big)
\ \le\ 2\exp\Big(-\frac{2Nt^2}{W_\ell^2}\Big).
\end{equation}
To make the bound uniform in $\latentData$, we use a net argument on $([0,1]^d)^n$.
Equip this space with the sup-norm on rows metric $\norm{\latentData-\latentData'}_\infty=\max_i\norm{\latentdata_i-\latentdata_i'}$.
It is well-known \citep[Corollary~4.2.11]{VershyninHDP2} that there exists a universal $0 < C_{0} < \infty$ so that, for any fixed $\eta > 0$, there exists an $\eta$-net $\cN_\eta$ satisfying
\begin{equation} \label{app:IneqNetSize}
|\cN_\eta|\ \le\ \Big(\frac{C_0}{\eta}\Big)^{nd}.
\end{equation}
Next, note that since $\latentData, \latentData' \in ([0,1]^d)^n$, we have that $r_{ij}(\latentData)$ is $2$-Lipschitz in $\norm{\cdot}_\infty$. Also, for each fixed $y\in[0,M]$, the map $r\mapsto \ell_M(y,r)$ is $C^1$ on $[0,\sqrt d]$, with derivative
\begin{equation*}
\frac{\partial}{\partial r} \ell_M(y,r)=
\begin{cases}
\dfrac{\phi(r/\sigma)}{\sigma\Phi(-r/\sigma)}, & y=0,\\[2mm]
\dfrac{r-y}{\sigma^2}, & 0<y<M,\\[2mm]
-\dfrac{\phi((M-r)/\sigma)}{\sigma(1-\Phi((M-r)/\sigma))}, & y=M.
\end{cases}
\end{equation*}
Each of these three branches is uniformly bounded on the relevant compact interval, so there exists $L_\ell=L_\ell(d,\sigma,M)<\infty$ such that
\begin{equation*}
|\ell_M(y,r)-\ell_M(y,r')|\le L_\ell |r-r'|,
\qquad y\in[0,M],\ r,r'\in[0,\sqrt d].
\end{equation*}
Combining this with the fact that $r_{ij}(\latentData)$ is $2$-Lipschitz in $\norm{\cdot}_\infty$, we have
\begin{equation}\label{app:eq:pair_loss_lipschitz}
|\ell_M(\tilde D_{ij},r_{ij}(\latentData))-\ell_M(\tilde D_{ij},r_{ij}(\latentData'))|
\le 2L_\ell\norm{\latentData-\latentData'}_\infty.
\end{equation}
Averaging over pairs gives
\begin{equation}\label{app:eq:emp_risk_lipschitz}
|R_n(\latentData)-R_n(\latentData')|\le 2L_\ell\norm{\latentData-\latentData'}_\infty.
\end{equation}
Since $R_{n}^{0}(\latentData)$ is a conditional expectation of $R_{n}(\latentData)$, we have the same bound:
\begin{equation}\label{app:eq:pop_risk_lipschitz}
|R_n^0(\latentData)-R_n^0(\latentData')|\le 2L_\ell\norm{\latentData-\latentData'}_\infty.
\end{equation}

Pick $\eta=t/(8L_\ell)$. For any $\latentData$, choose $\latentData'\in\cN_\eta$ with $\norm{\latentData-\latentData'}_\infty\le \eta$. By the triangle inequality, \eqref{app:eq:emp_risk_lipschitz} and \eqref{app:eq:pop_risk_lipschitz},
\begin{align}
\label{app:eq:net_reduction}
|R_n(\latentData)-R_n^0(\latentData)|
&\le |R_n(\latentData')-R_n^0(\latentData')| + |R_n(\latentData)-R_n(\latentData')| + |R_n^0(\latentData)-R_n^0(\latentData')| \notag\\
&\le |R_n(\latentData')-R_n^0(\latentData')| + 4L_\ell\eta
= |R_n(\latentData')-R_n^0(\latentData')| + \frac{t}{2}.
\end{align}
Therefore, by \eqref{app:eq:net_reduction}, a union bound over $\cN_\eta$ together with \eqref{app:IneqHoeffStep1} applied at level $t/2$ gives
\begin{equation*}
\Pbb\Big(\sup_{\latentData\in([0,1]^d)^n}|R_n(\latentData)-R_n^0(\latentData)|>t\ \Big|\ \latentData^\star\Big)
\le
2|\cN_\eta|\exp\Big(-\frac{Nt^2}{2W_\ell^2}\Big).
\end{equation*}
Substituting \eqref{app:IneqNetSize} and $\eta=t/(8L_\ell)$ and then enlarging constants gives the conclusion for all sufficiently small $t$.
\end{proof}

\subsection{A quadratic lower bound for the population risk} \label{app:SubsecQuadraticMargin}

We show $R_n^0(\latentData)$ is minimized (up to isometry) at $\latentData^\star$, and furthermore that the risk increases quadratically as you move away from $\latentData^\star$. More precisely, we will show

\begin{proposition}\label{app:prop:marginOrbit}
There exists $c_{\mathrm{mg}}=c_{\mathrm{mg}}(d,\sigma,M)>0$ such that the following holds (uniformly in $\latentData$) with probability going to 1 as $n$ goes to infinity: 
\begin{equation} \label{app:IneqStep2Conclusion}
R_n^0(\latentData)-R_n^0(\latentData^\star)\ \ge\ c_{\mathrm{mg}}\ d_{\cG}(\latentData,\latentData^\star)^2,
\qquad \forall \latentData\in([0,1]^d)^n.
\end{equation}
\end{proposition}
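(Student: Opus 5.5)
The plan is to pass through three reductions: a pointwise quadratic margin for $g$, an identity linking squared distances to Gram matrices, and a stability bound for recovering configurations from Gram matrices.

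\textbf{Step 1 (pointwise margin).} Since $\ell_M(y,r)=-\log q_{M,r}(y)$ is a log-likelihood, $g(r;r^\star)-g(r^\star;r^\star)$ is the Kullback--Leibler divergence between the laws $q_{M,r^\star}\,d\nu$ and $q_{M,r}\,d\nu$. This family is a smooth, identifiable one-parameter model on the compact set $r,r^\star\in[0,\sqrt d]$. The means $\mu_M(r)$ are strictly increasing, and the Fisher information is bounded below there because the densities are bounded above and below, as in the proof of Lemma~\ref{app:lem:ULLN}. Together with Pinsker's inequality, this gives a constant $c_1(d,\sigma,M)>0$ with
\begin{equation*}
g(r;r^\star)-g(r^\star;r^\star)\ge c_1 (r-r^\star)^2 .
\end{equation*}
Because $r,r^\star\le\sqrt d$, we also have $|r^2-r^{\star2}|\le 2\sqrt d\,|r-r^\star|$. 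Averaging over pairs therefore yields
\begin{equation*}
R_n^0(\latentData)-R_n^0(\latentData^\star)\ge \frac{c_1}{4dN}\sum_{i<j}\big(r_{ij}^2-r_{ij}^{\star2}\big)^2 .
\end{equation*}

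\textbf{Step 2 (squared distances to Gram matrices).} Let $\mb A$ be the matrix of squared distances, $A_{ij}=r_{ij}^2$, and let $\mb A^\star$ be its analogue for $\latentData^\star$. Double centering recovers the Gram matrix: $-\tfrac12\mathcal{C}(\mb A)=\latentData_c\latentData_c^\top$. Lemma~\ref{app:LemmaOpNorm} then gives
\begin{equation*}
\norm{\latentData_c\latentData_c^\top-\latentData_c^\star\latentData_c^{\star\top}}_F^2\le \tfrac14\norm{\mb A-\mb A^\star}_F^2=\tfrac12\sum_{i<j}\big(r_{ij}^2-r_{ij}^{\star2}\big)^2 .
\end{equation*}
Setting $\mb G=\latentData_c\latentData_c^\top$ and $\mb G^\star=\latentData_c^\star\latentData_c^{\star\top}$, and using $N\asymp n^2$, we get $R_n^0(\latentData)-R_n^0(\latentData^\star)\gtrsim n^{-2}\norm{\mb G-\mb G^\star}_F^2$.

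\textbf{Step 3 (Procrustes stability, the main obstacle).} It remains to show that, with probability tending to one, $n^{-2}\norm{\mb G-\mb G^\star}_F^2\ge c\, d_{\cG}(\latentData,\latentData^\star)^2$ uniformly in $\latentData$. I will use the standard Procrustes perturbation bound: for $\mb X_c^\star$ of rank $d$,
\begin{equation*}
\inf_{\mb Q\in\cO(d)}\norm{\latentData_c-\latentData_c^\star\mb Q}_F^2\le \frac{C\,\norm{\mb G-\mb G^\star}_F^2}{\lambda_d(\mb G^\star)},
\end{equation*}
which holds for arbitrary $\latentData_c$ of rank at most $d$. The cleanest form of this is the inequality $\norm{\mb X\mb X^\top-\mb Y\mb Y^\top}_F^2\ge 2(\sqrt2-1)\,\sigma_d^2(\mb Y)\,\min_{\mb Q}\norm{\mb X-\mb Y\mb Q}_F^2$ (Tu et al.), applied with $\mb Y=\latentData_c^\star$.

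The nonzero eigenvalues of $\mb G^\star$ equal those of $\latentData_c^{\star\top}\latentData_c^\star=\sum_i(\latentdata_i^\star-\bar{\latentdata}^\star)(\latentdata_i^\star-\bar{\latentdata}^\star)^\top$. By the law of large numbers, or matrix Hoeffding, for i.i.d.\ uniform points this is $n(\tfrac1{12}\mb I_d+o(1))$. Hence $\lambda_d(\mb G^\star)\ge n/24$ with probability tending to one, and this is the only place the randomness of the design enters.

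Combining the three steps gives
\begin{equation*}
n\,d_{\cG}(\latentData,\latentData^\star)^2\le \frac{C\norm{\mb G-\mb G^\star}_F^2}{n/24},
\end{equation*}
so $d_{\cG}(\latentData,\latentData^\star)^2\lesssim n^{-2}\norm{\mb G-\mb G^\star}_F^2\lesssim R_n^0(\latentData)-R_n^0(\latentData^\star)$. The constant depends only on $(d,\sigma,M)$, which proves \eqref{app:IneqStep2Conclusion}.

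The delicate points are twofold. First, the Procrustes inequality must be used in the form valid for \emph{all} $\latentData$, not just locally near $\latentData^\star$. Second, the KL lower bound in Step 1 must be uniform, including at the atoms $y=0$ and $y=M$. If a direct Fisher-information argument is awkward there, I would instead lower-bound the KL divergence by $2\,\mathrm{TV}^2$ and bound the total variation from below by $|\mu_M(r)-\mu_M(r^\star)|/M\ge c|r-r^\star|$, using $\mu_M'(r)\ge \Phi(-\sqrt d/\sigma)\cdot(\text{positive})$ on the compact range.
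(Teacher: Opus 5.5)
Your proposal is correct. Steps 1 and 2 coincide with the paper's Lemmas~\ref{app:lem:strongConvex} and~\ref{app:lem:riskToDist} and the Gram-matrix part of Lemma~\ref{app:lem:distToProc}. Your Step 1 fallback (Pinsker, then $|\mu_M(r)-\mu_M(r^\star)|\le 2M\,\|P_r-P_{r^\star}\|_{\mathrm{TV}}$, then a positive lower bound on $\mu_M'$ over $[0,\sqrt d]$) is exactly the paper's argument. You should lead with it, because ``densities bounded above and below'' does not by itself bound the Fisher information from below.

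The real difference is in Step 3. The paper uses the \emph{local} perturbation bound of \cite{AriasCastroJavanmardPelletier2020}, which holds only when $\|(\latentData_c^\star)^\dagger\|_{\mathrm{op}}\,\norm{G(\latentData)-G(\latentData^\star)}_F^{1/2}\le 1/\sqrt2$. It therefore splits into the regime $\Delta_{\mathrm{dist}}\le\alpha^4/(4d)$ and its complement, where it uses the crude bound $d_{\cG}^2\le d$; this produces $c_2=\max\{(1+\sqrt2)^2d/\alpha^2,\,4d^2/\alpha^4\}$. You instead invoke the \emph{global} inequality of Tu et al., valid for all $\mb U,\mb V\in\RR^{n\times d}$. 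That removes the case split and gives a single explicit constant, of order $c_{\mathrm{sc}}/d$ once $\sigma_d^2(\latentData_c^\star)\ge n/24$.

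For the conditioning, you use the law of large numbers for the fixed-size matrix $n^{-1}(\latentData_c^\star)^\top\latentData_c^\star\to\tfrac1{12}\mb I_d$. This is simpler than the paper's route through \cite[Theorem~4.6.1]{VershyninHDP2} plus a Weyl correction for the centering. The paper's version gives exponentially small failure probabilities, but the proposition only needs probability tending to one.

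Both proofs share the same skeleton: a pointwise KL margin, contraction of squared-distance errors under double centering, and a Gram-to-Procrustes bound scaled by $s_{\min}(\latentData_c^\star)\asymp\sqrt n$. Yours is slightly cleaner, at the cost of relying on an external global Procrustes lemma.
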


This is the main result in this section. The proof is deferred to the end, in sub-section \ref{app:SecQuadMarginProof}.

\subsubsection{Initial quadratic lower bound in the distance}

\begin{lemma}\label{app:lem:strongConvex}
There exists $c_{\mathrm{sc}}=c_{\mathrm{sc}}(\sigma,M,d)>0$ such that for all $r,r^\star\in[0,\sqrt{d}]$,
\begin{equation*}
g(r;r^\star)-g(r^\star;r^\star)\ \ge\ c_{\mathrm{sc}}\,(r-r^\star)^2,
\end{equation*}
where $g$ is defined in \eqref{app:eq:popRiskPt}.
\end{lemma}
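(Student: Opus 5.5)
The plan is to recognize the left-hand side as a Kullback--Leibler divergence and then bound it from below with Pinsker's inequality and a first-moment argument. Write $P_r$ for the law on $[0,M]$ with density $q_{M,r}$ with respect to the reference measure $\nu$ in \eqref{app:eq:refmeasure}. By \eqref{app:eq:popRiskPt}, the variable $T_M([r^\star+\sigma Z]_+)$ has law $P_{r^\star}$, so
\begin{equation*}
g(r;r^\star)-g(r^\star;r^\star)=\E_{Y\sim P_{r^\star}}\Big[\log\frac{q_{M,r^\star}(Y)}{q_{M,r}(Y)}\Big]=\mathrm{KL}\big(P_{r^\star}\,\|\,P_r\big).
\end{equation*}
This expectation is finite and well defined. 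The reason is that, as shown in the proof of Lemma~\ref{app:lem:ULLN}, $q_{M,r}$ is bounded between $q_-$ and $q_+$ uniformly over $r\in[0,\sqrt d]$ and $y\in[0,M]$. In particular, nonnegativity of KL already shows that $r^\star$ minimizes $g(\cdot;r^\star)$; the work is to make this quantitative.

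Next, apply Pinsker's inequality, $\mathrm{KL}(P_{r^\star}\|P_r)\ge 2\|P_{r^\star}-P_r\|_{\mathrm{TV}}^2$. This reduces the lemma to a linear lower bound $\|P_{r^\star}-P_r\|_{\mathrm{TV}}\ge c_0|r-r^\star|$. To get it, test against the bounded function $f(y)=y$, which takes values in $[0,M]$. This gives $|\E_{P_r}Y-\E_{P_{r^\star}}Y|\le M\,\|P_r-P_{r^\star}\|_{\mathrm{TV}}$, using the convention $\|\cdot\|_{\mathrm{TV}}=\sup_A|P(A)-Q(A)|$ and the fact that $f$ has oscillation $M$. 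Since $\E_{P_r}Y=\mu_M(r)$ from \eqref{eq:noisyD}, it suffices to show that $\mu_M$ is bi-Lipschitz from below on $[0,\sqrt d]$.

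For this, write $\mu_M(r)=\E[\min\{\max\{r+\sigma Z,0\},M\}]$ and differentiate under the expectation. This is justified because the integrand is $1$-Lipschitz in $r$, and it gives
\begin{equation*}
\mu_M'(r)=\Pbb(0<r+\sigma Z<M)=\Phi\big((M-r)/\sigma\big)-\Phi(-r/\sigma).
\end{equation*}
For $r\in[0,\sqrt d]$ and $M\ge\sqrt d$ we have $(M-r)/\sigma\ge 0\ge -r/\sigma$, and the gap between the two arguments is $M/\sigma$, a fixed positive quantity. The derivative is continuous and strictly positive on the compact interval, hence bounded below by some $a=a(d,\sigma,M)>0$. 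One explicit choice is $a\ge\min\{\Phi(0)-\Phi(-\sqrt d/\sigma),\ \Phi(M/\sigma)-\Phi(0)\}$, obtained by checking which of the two arguments is closer to $0$. The mean value theorem then gives $|\mu_M(r)-\mu_M(r^\star)|\ge a|r-r^\star|$. Chaining the inequalities yields $\|P_{r^\star}-P_r\|_{\mathrm{TV}}\ge (a/M)|r-r^\star|$, and therefore the lemma holds with $c_{\mathrm{sc}}=2a^2/M^2$.

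The step needing the most care is the KL identification: one must check that the expectation defining $g$ is taken against the correct law $P_{r^\star}$ with respect to $\nu$, including the atoms at $0$ and $M$. The uniform two-sided density bounds make this routine. The only genuinely quantitative input is the lower bound on $\mu_M'$, which is where the hypothesis $M\ge\sqrt d$ enters.
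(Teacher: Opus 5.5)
Your proposal is correct and follows the paper's proof essentially step for step. You identify $g(r;r^\star)-g(r^\star;r^\star)$ with $\mathrm{KL}(P_{r^\star}\,\|\,P_r)$, apply Pinsker, compare first moments to pass to $|\mu_M(r)-\mu_M(r^\star)|$, and lower-bound $\mu_M'(r)=\Phi((M-r)/\sigma)-\Phi(-r/\sigma)$ before invoking the mean value theorem.

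The only differences are in constants. Your oscillation argument bounds the first-moment gap by $M\|P_r-P_{r^\star}\|_{\mathrm{TV}}$ where the paper has $2M\|P_r-P_{r^\star}\|_{\mathrm{TV}}$. Your explicit bound on $\mu_M'$ replaces the paper's $\Phi(M/(2\sigma))-\frac12$; it is valid because $\mu_M'$ is unimodal on $[0,M]$ and hence minimized over $[0,\sqrt d]$ at an endpoint, though the ``closer to $0$'' remark does not by itself establish it, and your compactness argument already suffices.
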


\begin{proof}
For fixed $r^\star$, let $P_r$ denote the law of $\tilde D=T_M([r+\sigma Z]_+)$, with density $q_{M,r}$ with respect to $\nu$; equivalently, $P_r(A)=\int_A q_{M,r}(y)\,\nu(dy)$ for measurable $A\subseteq[0,M]$. Thus the observation generated at the true distance $r^\star$ has law $P_{r^\star}$, while evaluating the loss at a candidate distance $r$ uses the density $q_{M,r}$. Therefore, by the definition of $g$ in \eqref{app:eq:popRiskPt},
\begin{equation}\label{app:eq:scalar_KL_identity}
g(r;r^\star)-g(r^\star;r^\star)
=
\int q_{M,r^\star}(y)\log\!\left(\frac{q_{M,r^\star}(y)}{q_{M,r}(y)}\right)\,\nu(dy)
=
\mathrm{KL}(P_{r^\star}\,\|\,P_r).
\end{equation}
By Pinsker's inequality and \eqref{app:eq:scalar_KL_identity},
\begin{equation}\label{app:eq:pinsker_scalar}
g(r;r^\star)-g(r^\star;r^\star)
\ge 2\|P_{r^\star}-P_r\|_{\mathrm{TV}}^2.
\end{equation}
Recalling the definition of $\mu_{M}$ from \eqref{eq:noisyD}, we have
\begin{align}
\label{app:eq:mean_tv_bound}
|\mu_M(r)-\mu_M(r^\star)|
&=
\left|\int_0^M y\,\big(q_{M,r}(y)-q_{M,r^\star}(y)\big)\,\nu(dy)\right| \notag\\
&\le
M\int_0^M \big|q_{M,r}(y)-q_{M,r^\star}(y)\big|\,\nu(dy) \notag\\
&=
2M\|P_r-P_{r^\star}\|_{\mathrm{TV}}.
\end{align}
Combining \eqref{app:eq:pinsker_scalar} and \eqref{app:eq:mean_tv_bound} gives
\begin{equation}\label{app:eq:kl_mu_lower}
g(r;r^\star)-g(r^\star;r^\star)
\ge \frac{\big(\mu_M(r)-\mu_M(r^\star)\big)^2}{2M^2}.
\end{equation}
For each fixed $Z$, the map $r\mapsto T_M([r+\sigma Z]_+)$ is piecewise linear with derivative $\1\{0<r+\sigma Z<M\}$ for almost every $r$, so dominated convergence yields
\begin{equation}\label{app:eq:mu_derivative}
\mu_M'(r)=\Pbb(0<r+\sigma Z<M)=\Phi\!\left(\frac{M-r}{\sigma}\right)-\Phi\!\left(-\frac{r}{\sigma}\right).
\end{equation}
Hence, for $r\in[0,\sqrt d]$,
\begin{equation}\label{app:eq:mu_derivative_lower}
\mu_M'(r)=\Phi\!\left(\frac{M-r}{\sigma}\right)+\Phi\!\left(\frac{r}{\sigma}\right)-1
\ge \Phi\!\left(\frac{M}{2\sigma}\right)-\frac12=:m_M.
\end{equation}
We note that $m_{M} > 0$, since $r\in[0,M]$ and therefore at least one of $r$ and $M-r$ is at least $M/2$. By the mean value theorem and \eqref{app:eq:mu_derivative_lower},
\begin{equation}\label{app:eq:mu_mvt_lower}
|\mu_M(r)-\mu_M(r^\star)|\ge m_M|r-r^\star|.
\end{equation}
Substituting \eqref{app:eq:mu_mvt_lower} into \eqref{app:eq:kl_mu_lower} proves the claim, with $c_{\mathrm{sc}}=m_M^2/(2M^2)$.
\end{proof}

\subsubsection{From pairwise distance error to Procrustes error}

Define the average squared distance discrepancy
\begin{equation*}
\Delta_{\mathrm{dist}}(\latentData,\latentData^\star)=\frac1N\sum_{i<j}\big(r_{ij}(\latentData)-r_{ij}^\star\big)^2.
\end{equation*}

We have the straightforward estimate:

\begin{lemma}[Population risk dominates distance discrepancy]\label{app:lem:riskToDist}
Under the model~\ref{eq:bmds_model} and loss \Eqref{eq:loss},
\begin{equation*}
R_n^0(\latentData)-R_n^0(\latentData^\star)\ \ge\ c_{\mathrm{sc}}\ \Delta_{\mathrm{dist}}(\latentData,\latentData^\star),
\end{equation*}
where $c_{\mathrm{sc}}$ is as in Lemma~\ref{app:lem:strongConvex}.
\end{lemma}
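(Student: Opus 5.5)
The argument is a direct termwise application of Lemma~\ref{app:lem:strongConvex}, followed by averaging over pairs.

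First, by the definition \eqref{app:eq:popRisk}, the population risk is an average of the scalar functions $g$:
\begin{equation*}
R_n^0(\latentData)-R_n^0(\latentData^\star)=\frac1N\sum_{i<j}\Big[g\big(r_{ij}(\latentData);r_{ij}^\star\big)-g\big(r_{ij}^\star;r_{ij}^\star\big)\Big].
\end{equation*}
This holds because $r_{ij}(\latentData^\star)=r_{ij}^\star$ by definition.

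Second, I check that Lemma~\ref{app:lem:strongConvex} applies to every pair. Both $\latentData$ and $\latentData^\star$ lie in $([0,1]^d)^n$. The diameter of the unit cube is $\sqrt d$, so $r_{ij}(\latentData)$ and $r_{ij}^\star$ both lie in $[0,\sqrt d]$. This is exactly the range required by the lemma, and the constant $c_{\mathrm{sc}}$ it provides depends only on $(\sigma,M,d)$.

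Applying the lemma pairwise then gives, for every $i<j$,
\begin{equation*}
g\big(r_{ij}(\latentData);r_{ij}^\star\big)-g\big(r_{ij}^\star;r_{ij}^\star\big)\ge c_{\mathrm{sc}}\big(r_{ij}(\latentData)-r_{ij}^\star\big)^2 .
\end{equation*}
Summing over pairs and dividing by $N$ yields $c_{\mathrm{sc}}\,\Delta_{\mathrm{dist}}(\latentData,\latentData^\star)$ on the right-hand side, which is the claim.

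I expect no real obstacle here. The one point needing care is that the bound is deterministic given $\latentData^\star$: it uses no randomness beyond the conditional expectation already built into $g$. The only constraint is that configurations are restricted to the cube, so that the lemma's domain condition holds.
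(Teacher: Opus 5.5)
Your proposal is correct and matches the paper's argument exactly: apply Lemma~\ref{app:lem:strongConvex} to each pair $(i,j)$, then average over the $N$ pairs. The paper leaves implicit the check that $r_{ij}(\latentData), r_{ij}^\star \in [0,\sqrt d]$ for configurations in the cube, which is what the lemma requires; you were right to verify it.
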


\begin{proof}
By Lemma~\ref{app:lem:strongConvex}, for each pair $(i,j)$,
\begin{equation*}
g\big(r_{ij}(\latentData);r_{ij}^\star\big)-g\big(r_{ij}^\star;r_{ij}^\star\big)\ \ge\ c_{\mathrm{sc}}\big(r_{ij}(\latentData)-r_{ij}^\star\big)^2.
\end{equation*}
Averaging over pairs yields the result.
\end{proof}

Next we relate $\Delta_{\mathrm{dist}}(\latentData,\latentData^\star)$ to $d_{\cG}(\latentData,\latentData^\star)$ using a standard Procrustes perturbation bound. Recall that $\latentData_c=\mb J\latentData$ and $\latentData_c^\star= \mb J\latentData^\star$ denote the centered configurations.

We will use the following consequence of Theorem~1 \cite{AriasCastroJavanmardPelletier2020} as our basic Procrustes perturbation bound: with $\varepsilon^2=\norm{\mb Y \mb Y^\top-\mb X \mb X^\top}_F$, if $\norm{\mb X^\dagger}_{\mathrm{op}}\varepsilon\le 1/\sqrt2$, then
\begin{equation*}
\min_{\mb Q\in\cO(d)}\norm{\mb Y- \mb X \mb Q}_F
\le (1+\sqrt2)\norm{\mb X^\dagger}_{\mathrm{op}}\norm{\mb Y \mb Y^\top- \mb X \mb X^\top}_F.
\end{equation*}
Note that the theorem in that paper looks slightly different and is quite a bit more complicated; this follows immediately from the second branch in \cite[Theorem~1, inequality~(5)]{AriasCastroJavanmardPelletier2020} after choosing Schatten norm $p=2$, simplifying and rephrasing in our notation.

\begin{lemma}\label{app:lem:distToProc}
For every $\alpha>0$, there exists $c_2=c_2(\alpha,d)>0$ such that, if
\begin{equation}\label{app:eq:wellconditioned}
s_{\min}(\latentData_c^\star)\ \ge\ \alpha\sqrt{n},
\end{equation}
then
\begin{equation*}
d_{\cG}(\latentData,\latentData^\star)^2\ \le\ c_2\ \Delta_{\mathrm{dist}}(\latentData,\latentData^\star),
\qquad \forall \latentData\in([0,1]^d)^n.
\end{equation*}
\end{lemma}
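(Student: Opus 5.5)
The plan is to pass through centered Gram matrices, as in classical MDS, and then apply the Procrustes perturbation bound from \cite{AriasCastroJavanmardPelletier2020} quoted above. Write $\mb S$ and $\mb S^\star$ for the $n\times n$ matrices of squared distances $r_{ij}(\latentData)^2$ and $(r_{ij}^\star)^2$. Since centering removes the translation ambiguity, the standard identity gives
\begin{equation*}
\latentData_c\latentData_c^\top=-\tfrac12\mb J\mb S\mb J,
\qquad
\latentData_c^\star\latentData_c^{\star\top}=-\tfrac12\mb J\mb S^\star\mb J.
\end{equation*}
Lemma~\ref{app:LemmaOpNorm} then yields
\begin{equation*}
\norm{\latentData_c\latentData_c^\top-\latentData_c^\star\latentData_c^{\star\top}}_F\le\tfrac12\norm{\mb S-\mb S^\star}_F .
\end{equation*}

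Next I bound $\norm{\mb S-\mb S^\star}_F$ by $\Delta_{\mathrm{dist}}$. Each entry factors as $r_{ij}^2-r_{ij}^{\star2}=(r_{ij}-r_{ij}^\star)(r_{ij}+r_{ij}^\star)$, and all points lie in $[0,1]^d$, so $r_{ij}+r_{ij}^\star\le 2\sqrt d$. Summing over both orderings of each pair gives
\begin{equation*}
\norm{\mb S-\mb S^\star}_F^2\le 8dN\Delta_{\mathrm{dist}}\le 4dn^2\Delta_{\mathrm{dist}}.
\end{equation*}
Hence $\varepsilon^2:=\norm{\latentData_c\latentData_c^\top-\latentData_c^\star\latentData_c^{\star\top}}_F\le \sqrt d\,n\,\Delta_{\mathrm{dist}}^{1/2}$. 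Condition \eqref{app:eq:wellconditioned} implies that $\latentData_c^\star$ has full column rank $d$, and therefore $\norm{(\latentData_c^\star)^\dagger}_{\mathrm{op}}=1/s_{\min}(\latentData_c^\star)\le 1/(\alpha\sqrt n)$.

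The argument then splits into two cases.

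\emph{Small discrepancy.} Suppose $\Delta_{\mathrm{dist}}\le \alpha^4/(4d)$. Then $\varepsilon^2\le\alpha^2 n/2$, so $\norm{(\latentData_c^\star)^\dagger}_{\mathrm{op}}\varepsilon\le 1/\sqrt2$ and the perturbation bound applies with $\mb Y=\latentData_c$ and $\mb X=\latentData_c^\star$. It gives
\begin{equation*}
\min_{\mb Q}\norm{\latentData_c-\latentData_c^\star\mb Q}_F\le (1+\sqrt2)\,\frac{\sqrt d\,n\,\Delta_{\mathrm{dist}}^{1/2}}{\alpha\sqrt n}.
\end{equation*}
Dividing by $\sqrt n$ and squaring, via the second expression for $d_{\cG}$ in Definition~\ref{rmk1:OD}, gives $d_{\cG}^2\le (1+\sqrt2)^2 d\,\alpha^{-2}\Delta_{\mathrm{dist}}$.

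\emph{Large discrepancy.} Suppose $\Delta_{\mathrm{dist}}>\alpha^4/(4d)$. Here I use the crude bound obtained by taking $\mb Q=\mb I$. Each centered row of $\latentData_c$ and of $\latentData_c^\star$ has norm at most $\sqrt d$, because the centroid lies in $[0,1]^d$. Hence $d_{\cG}^2\le 4d\le (16d^2/\alpha^4)\Delta_{\mathrm{dist}}$.

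Taking $c_2=\max\{(1+\sqrt2)^2d/\alpha^2,\,16d^2/\alpha^4\}$ covers both cases.

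The main point needing care is checking the hypotheses of the quoted perturbation result. One must confirm that the pseudo-inverse operator norm equals $1/s_{\min}$, which relies on the full-rank conclusion drawn from \eqref{app:eq:wellconditioned}. One must also confirm that the theorem's $\varepsilon^2$ is indeed the Frobenius norm of the Gram difference and not its square; the scaling $\varepsilon^2\lesssim n\Delta_{\mathrm{dist}}^{1/2}$ against $s_{\min}^2\gtrsim n$ is exactly what makes the threshold independent of $n$. The rest is bookkeeping of constants.
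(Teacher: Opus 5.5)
Your proposal is correct and follows the paper's proof step for step. It uses the centered Gram identity with Lemma~\ref{app:LemmaOpNorm}, the bound $\norm{\mb S-\mb S^\star}_F^2\le 8dN\,\Delta_{\mathrm{dist}}$, the quoted Arias-Castro--Javanmard--Pelletier bound under the same threshold $\Delta_{\mathrm{dist}}\le\alpha^4/(4d)$, and obtains the same small-case constant $(1+\sqrt2)^2d/\alpha^2$. The only difference is the large-discrepancy case: the paper takes $\mb Q=\mb I$, $\mb b=\mb 0$ in the uncentered form to get $d_{\cG}^2\le d$ and the constant $4d^2/\alpha^4$, whereas your centered-row bound gives $4d$ and $16d^2/\alpha^4$, which is looser but equally valid.
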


\begin{proof}
We follow  \cite{AriasCastroJavanmardPelletier2020} and work with centered Gram matrices.
Denote by $D^2(\latentData)$ the squared-distance matrix with entries $D^{2}(\latentData)_{ij} = \|\latentdata_i -\latentdata_j\|^{2}$. With $\mb J$ as above, the centered Gram matrix is
\begin{equation*}
G(\latentData)=-\frac12\, \mb J D^2(\latentData) \mb J = \latentData_c\latentData_c^\top.
\end{equation*}

We begin by bounding $G(\latentData)-G(\latentData^\star)$ in terms of $\Delta_{\mathrm{dist}}(\latentData,\latentData^\star)$. By Lemma~\ref{app:LemmaOpNorm}, the Frobenius--Frobenius operator norm of $\mb A\mapsto \mb J\mb A\mb J$ is $1$, so
\begin{equation}\label{app:eq:IneqGD_new}
\norm{G(\latentData)-G(\latentData^\star)}_F
= \frac12\norm{\mb J\big(D^2(\latentData)-D^2(\latentData^\star)\big)\mb J}_F
\le \frac12\,\norm{D^2(\latentData)-D^2(\latentData^\star)}_{F}.
\end{equation}

For $r,r^\star\in[0,\sqrt d]$ we have
\begin{equation}\label{app:eq:square_to_linear_new}
|r^2-(r^\star)^2|
= |r-r^\star|\,(r+r^\star)
\le 2\sqrt d\,|r-r^\star|.
\end{equation}
Therefore, by \eqref{app:eq:square_to_linear_new},
\begin{align}
\label{app:eq:IneqDDelta_new}
\norm{D^2(\latentData)-D^2(\latentData^\star)}_F^{2}
&= \sum_{i\ne j}\bigl(r_{ij}(\latentData)^{2} - (r_{ij}^{\star})^{2}\bigr)^{2} \notag\\
&\le 4d \sum_{i\ne j}\bigl(r_{ij}(\latentData) - r_{ij}^{\star}\bigr)^{2}
\\
&= 8d\sum_{i<j}\bigl(r_{ij}(\latentData)-r_{ij}^{\star}\bigr)^{2} \notag\\
&= 8dN \,\Delta_{\mathrm{dist}}(\latentData,\latentData^\star).
\end{align}

Combining \eqref{app:eq:IneqGD_new} and \eqref{app:eq:IneqDDelta_new} gives
\begin{equation}\label{app:eq:Gram_from_Delta_new}
\norm{G(\latentData)-G(\latentData^\star)}_F \le \sqrt{2dN\,\Delta_{\mathrm{dist}}(\latentData,\latentData^\star)}.
\end{equation}

Let
\begin{equation}\label{app:eq:ACJP_eps_def_new}
\varepsilon = \norm{G(\latentData)-G(\latentData^\star)}_F^{1/2}.
\end{equation}

We now apply \cite[Theorem~1]{AriasCastroJavanmardPelletier2020} with $\mb X=\latentData_c^\star$ and $\mb Y=\latentData_c$ and Schatten norm $p=2$
(which is equal to the Frobenius norm). In the notation of that theorem, we have
$\mb Y\mb Y^\top-\mb X\mb X^\top = G(\latentData)-G(\latentData^\star)$ and hence $\varepsilon^2=\|\mb Y\mb Y^\top-\mb X\mb X^\top\|_F$.
Using the second branch in the statement of \cite[Theorem~1, inequality (5)]{AriasCastroJavanmardPelletier2020}, if
\begin{equation}\label{app:eq:ACJP_condition_new}
\|(\latentData_c^\star)^\dagger\|_{\mathrm{op}}\,\varepsilon \le \frac1{\sqrt2},
\end{equation}
then
\begin{equation}\label{app:eq:ACJP_bound_new}
\min_{\mb Q\in\cO(d)}\norm{\latentData_c - \latentData_c^\star \mb Q}_F
\le (1+\sqrt2)\,\|(\latentData_c^\star)^\dagger\|_{\mathrm{op}}\ \norm{G(\latentData)-G(\latentData^\star)}_F.
\end{equation}

We now give a sufficient condition for \eqref{app:eq:ACJP_condition_new}. Under \eqref{app:eq:wellconditioned}, we have $\|(\latentData_c^\star)^\dagger\|_{\mathrm{op}} = 1/s_{\min}(\latentData_c^\star)\le 1/(\alpha\sqrt n)$.
Moreover, since $N=\binom{n}{2}\le n^2/2$, Equations~\ref{app:eq:Gram_from_Delta_new}--\ref{app:eq:ACJP_eps_def_new} imply
\begin{equation}\label{app:eq:eps_bound_from_delta}
\varepsilon
\le \bigl(2dN\,\Delta_{\mathrm{dist}}(\latentData,\latentData^\star)\bigr)^{1/4}
\le d^{1/4}n^{1/2}\,\Delta_{\mathrm{dist}}(\latentData,\latentData^\star)^{1/4}.
\end{equation}
Hence, by \eqref{app:eq:wellconditioned} and \eqref{app:eq:eps_bound_from_delta}, \eqref{app:eq:ACJP_condition_new} holds whenever
\begin{equation}\label{app:eq:Delta_small_regime_new}
\Delta_{\mathrm{dist}}(\latentData,\latentData^\star)\ \le\ \frac{\alpha^4}{4d}.
\end{equation}

We now obtain our bound in two cases: when  \eqref{app:eq:Delta_small_regime_new} holds, and when it fails. We begin by assuming it holds. Combining
\eqref{app:eq:ACJP_bound_new} with \eqref{app:eq:Gram_from_Delta_new} and \eqref{app:eq:wellconditioned} gives
\begin{align}
\frac1{\sqrt n}\min_{\mb Q\in\cO(d)}\norm{\latentData_c - \latentData_c^\star \mb Q}_F
&\le \frac{1+\sqrt2}{\alpha n}\ \sqrt{2dN\,\Delta_{\mathrm{dist}}(\latentData,\latentData^\star)} \notag\\
&\le \frac{1+\sqrt2}{\alpha}\ \sqrt{d\,\Delta_{\mathrm{dist}}(\latentData,\latentData^\star)}.
\label{app:eq:rms_bound_small_new}
\end{align}
By the definition of $d_{\cG}$ in \eqref{app:eq:ProcrustesMetric}, the left-hand side equals $d_{\cG}(\latentData,\latentData^\star)$, so
squaring \eqref{app:eq:rms_bound_small_new} yields
\begin{equation}\label{app:eq:final_small_new}
d_{\cG}(\latentData,\latentData^\star)^2
\le \frac{(1+\sqrt2)^2}{\alpha^2}\ d\ \Delta_{\mathrm{dist}}(\latentData,\latentData^\star)
\qquad\text{whenever \eqref{app:eq:Delta_small_regime_new} holds.}
\end{equation}

Finally, we consider the case that  \eqref{app:eq:Delta_small_regime_new} fails. In this case, $\Delta_{\mathrm{dist}}(\latentData,\latentData^\star)\ge \alpha^4/(4d)$.
Since $\latentdata_i,\latentdata_i^\star\in[0,1]^d$, choosing $\mb Q= \mb I$ and $\mb b=\mb 0$ in the uncentered definition of the orbit distance gives the trivial bound $d_{\cG}(\latentData,\latentData^\star)^2\le d$.
Therefore,
\begin{equation}\label{app:eq:final_large_new}
d_{\cG}(\latentData,\latentData^\star)^2\le d
\le \frac{4d^2}{\alpha^4}\ \Delta_{\mathrm{dist}}(\latentData,\latentData^\star)
\qquad\text{whenever \eqref{app:eq:Delta_small_regime_new} fails.}
\end{equation}

Taking
\begin{equation*}
c_2=\max\Big\{\frac{(1+\sqrt2)^2}{\alpha^2}d,\ \frac{4d^2}{\alpha^4}\Big\}
\end{equation*}
and combining \eqref{app:eq:final_small_new} and \eqref{app:eq:final_large_new} proves the lemma.
\end{proof}

\subsubsection{Well-conditioning holds with high probability for uniform points}

\begin{lemma}\label{app:lem:conditioning}
Let $\latentdata_i^\star\stackrel{iid}{\sim}\mathrm{Unif}([0,1]^d)$, let $\latentData^{\star}$ be the associated matrix of points, and let $\latentData_c^\star$ be its centered version. Then there exists $c_1=c_1(d)>0$ such that
\begin{equation*}
\lim_{n \rightarrow \infty} \Pbb\Big(s_{\min}(\latentData_c^\star)\ge c_1\sqrt n\Big) = 1.
\end{equation*}
\end{lemma}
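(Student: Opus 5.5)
We reduce the claim to a law of large numbers for the empirical covariance. Observe that
\begin{equation*}
(\latentData_c^\star)^\top\latentData_c^\star=\latentData^{\star\top}\mb J\latentData^\star=\sum_{i=1}^n(\latentdata_i^\star-\bar{\latentdata}^\star)(\latentdata_i^\star-\bar{\latentdata}^\star)^\top = n\,\widehat{\Sigma}_n,
\end{equation*}
where $\bar{\latentdata}^\star=n^{-1}\sum_i\latentdata_i^\star$ and $\widehat\Sigma_n$ is the empirical covariance matrix. Since $\latentData_c^\star$ is $n\times d$, its squared smallest singular value is $s_{\min}(\latentData_c^\star)^2=\lambda_{\min}\big((\latentData_c^\star)^\top\latentData_c^\star\big)=n\,\lambda_{\min}(\widehat\Sigma_n)$. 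It therefore suffices to show that $\lambda_{\min}(\widehat\Sigma_n)\ge c_1^2$ with probability tending to one.

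The population covariance of $\mathrm{Unif}([0,1]^d)$ is $\Sigma=\frac1{12}\mb I_d$, because the coordinates are independent, each with variance $1/12$. Write
\begin{equation*}
\widehat\Sigma_n=\frac1n\sum_i\latentdata_i^\star\latentdata_i^{\star\top}-\bar{\latentdata}^\star\bar{\latentdata}^{\star\top}.
\end{equation*}
The entries of both terms are averages of i.i.d.\ random variables bounded in $[0,1]$. By Hoeffding's inequality applied to each of the finitely many ($O(d^2)$) entries, or simply by the weak law of large numbers, we get $\frac1n\sum_i\latentdata_i^\star\latentdata_i^{\star\top}\to\E[\latentdata\latentdata^\top]$ and $\bar{\latentdata}^\star\to\frac12\1$ in probability. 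Consequently $\widehat\Sigma_n\to\Sigma$ entrywise, and hence in operator norm, because $d$ is fixed.

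By Weyl's inequality, $|\lambda_{\min}(\widehat\Sigma_n)-\lambda_{\min}(\Sigma)|\le\norm{\widehat\Sigma_n-\Sigma}_{\mathrm{op}}$. Thus, on the event $\norm{\widehat\Sigma_n-\Sigma}_{\mathrm{op}}\le 1/24$, whose probability tends to one, we have $\lambda_{\min}(\widehat\Sigma_n)\ge 1/24$. This yields $s_{\min}(\latentData_c^\star)\ge\sqrt{n/24}$, so we may take $c_1=1/\sqrt{24}$. Hoeffding's inequality plus a union bound over entries even gives an exponential rate, but the lemma only requires convergence.

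No step here is a real obstacle. The only care needed is to track the centering correctly, which means subtracting $\bar{\latentdata}^\star\bar{\latentdata}^{\star\top}$ rather than using the uncentered second moment, and to note that $s_{\min}$ refers to the $d$ nonzero singular values, i.e.\ to $\lambda_{\min}$ of the $d\times d$ Gram matrix.
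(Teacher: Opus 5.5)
Your proof is correct, but it takes a different route from the paper's.

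The paper stays with the $n\times d$ matrix. It writes $\latentData_c^\star=\mb U-\1\bar{\mb u}^\top$, where $\mb U$ has rows $\latentdata_i^\star-\frac12\1_d$ (centered at the true mean). It lower-bounds $s_{\min}(\mb U)$ by $c_U\sqrt n$ using a non-asymptotic singular-value theorem for matrices with independent isotropic subgaussian rows (Vershynin's Theorem~4.6.1). It then bounds the rank-one correction $\|\1\bar{\mb u}^\top\|_{\mathrm{op}}=\sqrt n\,\|\bar{\mb u}\|$ by Hoeffding and combines the two with Weyl's inequality for singular values.

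You instead pass to the $d\times d$ Gram matrix. There the identity $(\latentData_c^\star)^\top\latentData_c^\star=n\widehat\Sigma_n$ absorbs the centering exactly, so no perturbation step is needed. You then use only the law of large numbers on finitely many entries and Weyl's inequality for eigenvalues.

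Because $d$ is fixed, your route suffices and is more elementary. It also gives an explicit, dimension-free constant: any $c_1<1/\sqrt{12}$ works, and this is asymptotically sharp since $s_{\min}(\latentData_c^\star)/\sqrt n\to 1/\sqrt{12}$. It needs only finite second moments and a nonsingular covariance, not subgaussian rows.

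What the paper's route buys is quantitative. It gives a failure probability of order $e^{-c_A n}$ directly from a black-box theorem, with a deviation term scaling like $\sqrt d$, which would survive letting $d$ grow with $n$. Your entrywise Hoeffding-plus-union-bound remark also gives an exponential rate for fixed $d$. However, you must control all $O(d^2)$ entries to accuracy of order $1/d$, so the rate degrades much faster in $d$.
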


\begin{proof}
Let $\mb u_i=\latentdata_i^\star-\E[\latentdata_i^\star]=\latentdata_i^\star-\frac12\1_d$, and let $\mb U$ be the $n\times d$ matrix with rows $\mb u_i^\top$. Then the rows of $\mb U$ are i.i.d., mean-zero, bounded random vectors in $\RR^d$, and
\begin{equation*}
\latentData_c^\star=\mb J \mb U= \mb U-\1 \bar{\mb u}^\top,
\qquad
\bar{\mb u}=\frac1n\sum_{i=1}^n \mb u_i.
\end{equation*}
We first lower-bound $s_{\min}(\mb U)$. Set $\mb A=\mb U\mb \Sigma^{-1/2}=\sqrt{12}\,\mb U$, where
\begin{equation*}
\mb \Sigma=\E[\mb u_i \mb u_i^\top]=\frac1{12}\mb I_d.
\end{equation*}
Then the rows of $\mb A$ are isotropic, subgaussian and have subgaussian norm bounded by a constant $K_d<\infty$ depending only on $d$. By \cite[Theorem~4.6.1]{VershyninHDP2}, there exists an absolute constant $0 < C < \infty$ so that for every $t>0$,
\begin{equation}\label{app:eq:vershynin_smin}
\Pbb\big[s_{\min}(\mb A)\ge \sqrt n-C K_d^2(\sqrt d+t)\big]\ge 1-2e^{-t^2}.
\end{equation}

Since $\mb A=\sqrt{12}\, \mb U$, choosing $t=\sqrt n/(4CK_d^2)$ in \eqref{app:eq:vershynin_smin} shows that there exist constants $c_U=c_U(d)>0$ and $c_A=c_A(d)>0$ such that, for all sufficiently large $n$,
\begin{equation}\label{app:eq:smin_U_bound}
\Pbb\big(s_{\min}(\mb U)\ge c_U\sqrt n\big) \ge 1-2e^{-c_A n}.
\end{equation}

We next control the rank-one correction $\1\bar{\mb u}^\top$. Because each coordinate of $\mb u_i$ is bounded and mean-zero, Hoeffding's inequality gives, for every $a>0$,
\begin{equation*}
\Pbb\big(\sqrt n\,|\bar u_k|>a\big)\le 2e^{-c a^2},
\qquad k=1,\dots,d,
\end{equation*}
for a constant $c=c(d)>0$. Therefore, by a union bound, for every $b>0$,
\begin{equation}\label{app:eq:ubar_bound}
\Pbb\big(\sqrt n\,\|\bar{\mb u}\|>b\big)
\le 2d\exp\left(-\frac{c b^2}{d}\right).
\end{equation}
Since $\|\1\bar{\mb u}^\top\|_{\mathrm{op}}=\sqrt n\,\|\bar{\mb u}\|$, substituting $b=(c_U/2)\sqrt n$ in \eqref{app:eq:ubar_bound} gives
\begin{equation}\label{app:eq:rank_one_correction_bound}
\Pbb\left(\|\1\bar{\mb u}^\top\|_{\mathrm{op}}> \frac{c_U}{2}\sqrt n\right)
\le 2d\exp\left(-\frac{c c_U^2}{4d}n\right).
\end{equation}
Let
\[
E_n=\left\{s_{\min}(\mb U)\ge c_U\sqrt n\right\}\cap
\left\{\|\1\bar{\mb u}^\top\|_{\mathrm{op}}\le \frac{c_U}{2}\sqrt n\right\}.
\]
By \eqref{app:eq:smin_U_bound} and \eqref{app:eq:rank_one_correction_bound}, $\Pbb(E_n)\to1$. On $E_n$, since $\latentData_c^\star= \mb U-\1\bar{\mb u}^\top$, Weyl's inequality for singular values gives
\begin{equation*}
s_{\min}(\latentData_c^\star)
=s_{\min}(\mb U-\1\bar{\mb u}^\top)
\ge s_{\min}(\mb U)-\|\1\bar{\mb u}^\top\|_{\mathrm{op}}
\ge (c_U/2)\sqrt n.
\end{equation*}
Taking $c_1=c_U/2$ proves the lemma.
\end{proof}

\subsubsection{Completing the quadratic lower bounds} \label{app:SecQuadMarginProof}

\begin{proof}[Proof of Proposition~\ref{app:prop:marginOrbit}]

Let $c_1=c_1(d)$ be the constant from Lemma~\ref{app:lem:conditioning}. We prove \eqref{app:IneqStep2Conclusion} on the high-probability event $s_{\min}(\latentData_c^\star)\ge c_1\sqrt n$, applying Lemma~\ref{app:lem:distToProc} with $\alpha=c_1$.

Combine Lemma~\ref{app:lem:riskToDist} and Lemma~\ref{app:lem:distToProc}:
\begin{equation*}
R_n^0(\latentData)-R_n^0(\latentData^\star)\ \ge\ c_{\mathrm{sc}}\,\Delta_{\mathrm{dist}}(\latentData,\latentData^\star)\ \ge\ c_{\mathrm{sc}}\cdot \frac{1}{c_2}\ d_{\cG}(\latentData,\latentData^\star)^2.
\end{equation*}
Set $c_{\mathrm{mg}}=c_{\mathrm{sc}}/c_2$.

\end{proof}

\begin{proposition}\label{app:prop:upperMarginOrbit}
There exists a constant $C_{\rm up}=C_{\rm up}(d,\sigma,M)>0$ such that, for every $\latentData\in([0,1]^d)^n$,
\begin{equation}\label{app:eq:global_upper_margin}
R_n^0(\latentData)-R_n^0(\latentData^\star)\ \le\ C_{\rm up}\, d_{\cG}(\latentData,\latentData^\star)^2.
\end{equation}
\end{proposition}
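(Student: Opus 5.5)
The plan is to mirror the lower-bound argument in reverse: first obtain a scalar upper bound $g(r;r^\star)-g(r^\star;r^\star)\le C(r-r^\star)^2$ for $r,r^\star\in[0,\sqrt d]$, then average over pairs, and finally bound the average squared distance discrepancy $\Delta_{\mathrm{dist}}$ by the orbit distance. Notably, no well-conditioning event is needed here, so the bound holds deterministically for every $\latentData$.

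\emph{Scalar step.} By \eqref{app:eq:scalar_KL_identity}, $g(r;r^\star)-g(r^\star;r^\star)=\mathrm{KL}(P_{r^\star}\|P_r)$. The function $r\mapsto g(r;r^\star)$ is $C^2$ on $[0,\sqrt d]$. Its first derivative vanishes at $r=r^\star$, being the expected score under the true parameter, which we can justify by differentiating under the integral sign since $\nu$ has finite mass on $[0,M]$. Its second derivative is $\E_{r^\star}[\partial_r^2\ell_M(\tilde D,r)]$. From the explicit derivatives in the proof of Lemma~\ref{app:lem:ULLN}, each branch of $\partial_r^2\ell_M(y,r)$ is continuous on the compact set $r\in[0,\sqrt d]$, $y\in[0,M]$:
\begin{itemize}
\item on the interior branch it equals $1/\sigma^2$;
\item at the atoms it is a derivative of an inverse Mills ratio evaluated at bounded arguments.
\end{itemize}
Hence $|\partial_r^2 g|\le 2C_g$ for a constant $C_g=C_g(d,\sigma,M)$. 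Taylor's theorem then gives $g(r;r^\star)-g(r^\star;r^\star)\le C_g(r-r^\star)^2$. As a cleaner alternative, we can bound KL by the $\chi^2$-divergence and use that $q_{M,r}$ is bounded below by $q_->0$ and is Lipschitz in $r$ uniformly in $y$, which gives $\chi^2\le C(r-r^\star)^2$.

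\emph{Averaging.} Summing over pairs gives $R_n^0(\latentData)-R_n^0(\latentData^\star)\le C_g\,\Delta_{\mathrm{dist}}(\latentData,\latentData^\star)$.

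\emph{From distances to orbit distance.} Since each $r_{ij}$ is invariant under the action of $\cG$, we may replace $\latentData^\star$ by any element of its orbit, $\latentData^\star\mb Q+\1\mb b^\top$. Take $(\mb Q,\mb b)$ attaining (or nearly attaining) the infimum in \eqref{app:eq:ProcrustesMetric}, and write $\mb y_i^\star$ for the transformed rows. By the reverse triangle inequality,
\[
\big|r_{ij}(\latentData)-r^\star_{ij}\big|=\big|\norm{\latentdata_i-\latentdata_j}-\norm{\mb y_i^\star-\mb y_j^\star}\big|\le \norm{\latentdata_i-\mb y_i^\star}+\norm{\latentdata_j-\mb y_j^\star}.
\]
Squaring and using $(a+b)^2\le 2a^2+2b^2$, then summing over $i<j$, gives
\[
\sum_{i<j}\big(r_{ij}(\latentData)-r^\star_{ij}\big)^2\le 2(n-1)\sum_i\norm{\latentdata_i-\mb y_i^\star}^2=2(n-1)\,n\,d_{\cG}(\latentData,\latentData^\star)^2.
\]
Dividing by $N=n(n-1)/2$ yields $\Delta_{\mathrm{dist}}\le 4\,d_{\cG}^2$. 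Combining, we may take $C_{\rm up}=4C_g$.

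The main obstacle is the scalar step. One must check that the score identity holds at the boundary $r^\star=0$, and that the second derivatives of the atom terms $-\log\Phi(-r/\sigma)$ and $-\log\bigl(1-\Phi((M-r)/\sigma)\bigr)$ stay bounded. Both follow from smoothness and positivity on compact intervals. If the Taylor route becomes delicate at the endpoints, we will use the $\chi^2$ route instead, which needs only the lower bound $q_-$ together with the Lipschitz bound from Lemma~\ref{app:lem:ULLN}.
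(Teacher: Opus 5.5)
Your proposal is correct and follows essentially the same route as the paper. The scalar step uses the score identity, a uniform bound on $\partial_r^2\ell_M$ and Taylor's theorem; you then average over pairs and apply the reverse triangle inequality against an optimally aligned copy of $\latentData^\star$. The only cosmetic difference is that you bound $\sum_{i<j}(a_i+a_j)^2\le 2(n-1)\sum_i a_i^2$ for an uncentered aligned representative, whereas the paper uses the exact identity $\sum_{i<j}\|\mb{h}_i-\mb{h}_j\|^2=n\|\mb{H}\|_F^2$ for the mean-zero rows of $\mb{H}=\latentData_c-\latentData_c^\star\mb{Q}$. Both give the same constant $C_{\rm up}=2H_\ell$, which is your $4C_g$ with $C_g=H_\ell/2$.
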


\begin{proof}
Fix $\latentData$ and choose $\mb Q\in\cO(d)$ so that $d_{\cG}(\latentData,\latentData^\star)=n^{-1/2}\norm{\latentData_c-\latentData_c^\star \mb Q}_F$. Write $\mb H=\latentData_c-\latentData_c^\star \mb Q$, and let $\mb h_i$ denote the $i$th row of $\mb H$. Then $\norm{\mb H}_F^2=n\,d_{\cG}(\latentData,\latentData^\star)^2$ and the rows of $\mb H$ sum to $0$.

For each fixed $y\in[0,M]$, the map $r\mapsto \ell_M(y,r)$ is $C^2$ on $[0,\sqrt d]$, and each of the three branches from \eqref{eq:cappedDensity} has second derivative uniformly bounded on the relevant compact interval. Hence there exists $H_\ell=H_\ell(d,\sigma,M)<\infty$ such that
\begin{equation}\label{app:eq:loss_second_derivative_bound}
\sup_{y\in[0,M]}\sup_{r\in[0,\sqrt d]}\big|\partial_r^2\ell_M(y,r)\big|\le H_\ell.
\end{equation}
By \eqref{app:eq:popRiskPt},
\begin{equation*}
g(r;r^\star)=\int \ell_M(y,r)q_{M,r^\star}(y)\,\nu(dy).
\end{equation*}
Using $\partial_r\ell_M(y,r)=-(\partial_r q_{M,r}(y))/q_{M,r}(y)$, we get
\begin{align*}
\partial_r g(r;r^\star)\big|_{r=r^\star}
&=-\int q_{M,r^\star}(y)\frac{\partial_r q_{M,r}(y)|_{r=r^\star}}{q_{M,r^\star}(y)}\,\nu(dy)\\
&=-\int \partial_r q_{M,r}(y)\big|_{r=r^\star}\,\nu(dy)\\
&=-\partial_r\left(\int q_{M,r}(y)\,\nu(dy)\right)\bigg|_{r=r^\star}=0,
\end{align*}
because $\int q_{M,r}(y)\,\nu(dy)=1$ for every $r$. Moreover, \eqref{app:eq:loss_second_derivative_bound} implies $|\partial_r^2 g(r;r^\star)|\le H_\ell$ for all $r,r^\star\in[0,\sqrt d]$. Taylor's theorem therefore gives
\begin{equation}\label{app:eq:g_upper_scalar}
g(r;r^\star)-g(r^\star;r^\star)
\le \frac{H_\ell}{2}(r-r^\star)^2,
\qquad r,r^\star\in[0,\sqrt d].
\end{equation}

Applying \eqref{app:eq:g_upper_scalar} with $r=r_{ij}(\latentData)$ and $r^\star=r_{ij}^\star$ and then using the fact that centering and rotation do not change pairwise distances, we obtain
\begin{equation}\label{app:eq:pair_upper_by_h}
g\big(r_{ij}(\latentData);r_{ij}^\star\big)-g\big(r_{ij}^\star;r_{ij}^\star\big)
\le \frac{H_\ell}{2}\big(r_{ij}(\latentData)-r_{ij}^\star\big)^2
\le \frac{H_\ell}{2}\norm{\mb h_i- \mb h_j}^2.
\end{equation}
Averaging \eqref{app:eq:pair_upper_by_h} over pairs and using the identity $\sum_{i<j}\norm{\mb h_i-\mb h_j}^2=n\norm{\mb H}_F^2$ yields
\begin{equation*}
R_n^0(\latentData)-R_n^0(\latentData^\star)
\le \frac{H_\ell}{2N}\sum_{i<j}\norm{\mb h_i-\mb h_j}^2
=\frac{H_\ell n\norm{\mb H}_F^2}{2N}
\le 2H_\ell\,d_{\cG}(\latentData,\latentData^\star)^2,
\end{equation*}
since $N=\binom n2\ge n^2/4$ for $n\ge2$. This proves \eqref{app:eq:global_upper_margin} with $C_{\rm up}=2H_\ell$.
\end{proof}

\subsection{Consistency of the exact posterior}\label{app:SecConsistencyPost}

\begin{proposition}[Consistency of the exact posterior]\label{app:prop:exact}
Suppose Assumption~\ref{app:A:model} holds. Then, for every $\delta>0$,
\begin{equation*}
\Pi_n\Big(\big\{\latentData:\ d_{\cG}(\latentData,\latentData^\star)>\delta\big\}\ \Big|\ \mb D\Big)\ \xrightarrow{\Pbb}\ 0.
\end{equation*}
\end{proposition}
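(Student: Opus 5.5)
We bound the posterior odds of the bad set $B_\delta=\{\latentData: d_{\cG}(\latentData,\latentData^\star)>\delta\}$ against a small neighbourhood of the truth, $A_\eta=\{\latentData: \norm{\latentData-\latentData^\star}_\infty\le \eta\}$. Write
\begin{equation*}
\Pi_n(B_\delta\mid\mb D)\le \frac{\int_{B_\delta}\pi_0(\latentData)\exp\{-N(R_n(\latentData)-R_n(\latentData^\star))\}\,d\latentData}{\int_{A_\eta}\pi_0(\latentData)\exp\{-N(R_n(\latentData)-R_n(\latentData^\star))\}\,d\latentData}.
\end{equation*}
All bounds below are taken on the intersection of two events. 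The first is the event of Proposition~\ref{app:prop:marginOrbit}, which has probability tending to one. The second is the event $\sup_{\latentData}|R_n-R_n^0|\le t_n$ from Lemma~\ref{app:lem:ULLN}.

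\textbf{Choice of $t_n$.} We take $t_n\to0$ slowly, e.g.\ $t_n=n^{-1/2}\log n$. Then $C_2 nd\log(C_1/t_n)-cNt_n^2\approx C n\log n - c' n\log^2 n\to-\infty$, so the concentration event has probability tending to one. On this event, $R_n(\latentData)-R_n(\latentData^\star)$ differs from $R_n^0(\latentData)-R_n^0(\latentData^\star)$ by at most $2t_n$.

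\textbf{Numerator.} By Proposition~\ref{app:prop:marginOrbit}, on $B_\delta$ we have $R_n(\latentData)-R_n(\latentData^\star)\ge c_{\mathrm{mg}}\delta^2-2t_n$. Using the prior bound $\pi_0\le e^{cn}$ and the fact that the state space has volume one, the numerator is at most $e^{cn}\exp\{-N(c_{\mathrm{mg}}\delta^2-2t_n)\}$.

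\textbf{Denominator.} On $A_\eta$, centering and choosing $\mb Q=\mb I$ give $d_{\cG}(\latentData,\latentData^\star)\le\eta$. Proposition~\ref{app:prop:upperMarginOrbit} then yields $R_n^0(\latentData)-R_n^0(\latentData^\star)\le C_{\rm up}\eta^2$. Hence the denominator is at least
\begin{equation*}
e^{-cn}\,\lambda(A_\eta)\,\exp\{-N(C_{\rm up}\eta^2+2t_n)\}.
\end{equation*}
Since $A_\eta\cap([0,1]^d)^n$ is a product of $n$ boxes of volume at least $\eta^d$, we have $\lambda(A_\eta)\ge\eta^{nd}$. Alternatively, the Lipschitz bound \eqref{app:eq:emp_risk_lipschitz} lets us avoid the population upper bound entirely.

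\textbf{Combining.} The ratio is at most
\begin{equation*}
\exp\{2cn+nd\log(1/\eta)+4Nt_n-N(c_{\mathrm{mg}}\delta^2-C_{\rm up}\eta^2)\}.
\end{equation*}
Choose $\eta$ fixed with $C_{\rm up}\eta^2\le c_{\mathrm{mg}}\delta^2/2$. The exponent is then $-N c_{\mathrm{mg}}\delta^2/2+O(n)+o(N)\to-\infty$. So the posterior mass of $B_\delta$ tends to zero on an event of probability tending to one, which gives convergence in probability.

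\textbf{Main obstacle: the denominator.} The delicate point is the lower bound on the denominator, because the problem is infinite-dimensional. The prior mass of a Frobenius/orbit ball is exponentially small in $n$. This is harmless only because the exponent penalty scales like $N\asymp n^2$ while the volume loss is only $O(n\log(1/\eta))$. I would double-check that $\eta$ can be held fixed rather than shrinking, and verify that $Nt_n$ is genuinely $o(N)$ while $t_n$ still keeps the union bound in Lemma~\ref{app:lem:ULLN} summable.
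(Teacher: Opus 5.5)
Your proposal is correct and is essentially the paper's proof. You bound the posterior odds of the bad set against a fixed-size neighbourhood of $\latentData^\star$, apply Proposition~\ref{app:prop:marginOrbit} in the numerator and Proposition~\ref{app:prop:upperMarginOrbit} together with the $e^{-O(n)}$ prior-volume bound in the denominator, and let the $N\asymp n^2$ exponent dominate. The paper uses a fixed slack $\gamma=\tfrac{3}{16}c_{\rm mg}\delta^2$ where you use $t_n\to0$, and both work.

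There is one harmless constant slip in the volume bound $\lambda(A_\eta)\ge\eta^{nd}$. If $\norm{\cdot}_\infty$ is the row-wise Euclidean sup norm, each row set is a piece of a ball, and its volume can fall below $\eta^d$ when $d\ge2$. If you instead mean a coordinate box, then you only get $d_{\cG}\le\sqrt d\,\eta$ rather than $\eta$. Either way only the $O(n)$ term changes, which is exactly what the paper's choice $b_\delta=\min\{1,a_0\delta/\sqrt d\}$ accounts for.
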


\begin{proof}
Fix $\delta>0$, and define the  constant
\begin{equation}\label{app:eq:a0_def_exact}
a_0=\min\Big\{\frac12,\ \sqrt{\frac{c_{\rm mg}}{4C_{\rm up}}}\Big\}\in\Big(0,\frac12\Big],
\end{equation}
where $c_{\rm mg}$ is from Proposition~\ref{app:prop:marginOrbit} and $C_{\rm up}$ is from Proposition~\ref{app:prop:upperMarginOrbit}. Let
\begin{equation*}
A_\delta=\{\latentData:\ d_{\cG}(\latentData,\latentData^\star)>\delta\},
\qquad
B_{a_0\delta}=\{\latentData:\ d_{\cG}(\latentData,\latentData^\star)\le a_0\delta\},
\qquad
\widetilde B_{a_0\delta}=B_{a_0\delta}\cap([0,1]^d)^n.
\end{equation*}
Because the integrand is nonnegative, restricting the domain of integration gives
\begin{equation}\label{app:eq:Btild_def_support}
\int e^{-N R_n(\latentData)}\,\Pi_0(d\latentData)\ \ge\ \int_{\widetilde B_{a_0\delta}} e^{-N R_n(\latentData)}\,\Pi_0(d\latentData).
\end{equation}

Let $\eta_n=\sup_{\latentData\in([0,1]^d)^n}|R_n(\latentData)-R_n^0(\latentData)|$. By Lemma~\ref{app:lem:ULLN}, $\eta_n\to 0$ in probability. Let
\begin{equation*}
E_n^{\mathrm{cond}}=\{s_{\min}(\latentData_c^\star)\ge c_1\sqrt n\},
\end{equation*}
where $c_1=c_1(d)$ is the constant from Lemma~\ref{app:lem:conditioning}; then $\Pbb(E_n^{\mathrm{cond}})\to1$. Fix $\gamma>0$. On the event $\{\eta_n\le \gamma\}$ we have, for all $\latentData$,
\begin{equation}\label{app:eq:Rn_sandwich}
R_n(\latentData)\ge R_n^0(\latentData)-\gamma,\qquad R_n(\latentData)\le R_n^0(\latentData)+\gamma.
\end{equation}
Therefore, using \eqref{app:eq:Btild_def_support} and then \eqref{app:eq:Rn_sandwich},
\begin{align*}
\Pi_n(A_\delta\mid \mb D)
&=
\frac{\int_{A_\delta} e^{-N R_n(\latentData)}\Pi_0(d\latentData)}{\int e^{-N R_n(\latentData)}\Pi_0(d\latentData)}
\le
\frac{\int_{A_\delta} e^{-N R_n(\latentData)}\Pi_0(d\latentData)}{\int_{\widetilde B_{a_0\delta}} e^{-N R_n(\latentData)}\Pi_0(d\latentData)}\\
&\le
\frac{\int_{A_\delta} e^{-N (R_n^0(\latentData)-\gamma)}\Pi_0(d\latentData)}{\int_{\widetilde B_{a_0\delta}} e^{-N (R_n^0(\latentData)+\gamma)}\Pi_0(d\latentData)}
=
e^{2N\gamma}\cdot
\frac{\int_{A_\delta} e^{-N R_n^0(\latentData)}\Pi_0(d\latentData)}{\int_{\widetilde B_{a_0\delta}} e^{-N R_n^0(\latentData)}\Pi_0(d\latentData)}.
\end{align*}

Next, on $E_n^{\mathrm{cond}}$, Proposition~\ref{app:prop:marginOrbit} gives
\begin{equation}\label{app:eq:exact_num_margin}
R_n^0(\latentData)\ge R_n^0(\latentData^\star)+c_{\rm mg}\delta^2\qquad\text{for all }\latentData\in A_\delta.
\end{equation}
Also, Proposition~\ref{app:prop:upperMarginOrbit} and the choice of $a_0$ in \eqref{app:eq:a0_def_exact} imply
\begin{equation}\label{app:eq:exact_den_margin}
R_n^0(\latentData)\le R_n^0(\latentData^\star)+C_{\rm up}a_0^2\delta^2
\le R_n^0(\latentData^\star)+\frac14 c_{\rm mg}\delta^2
\qquad\text{for all }\latentData\in \widetilde B_{a_0\delta}.
\end{equation}
Using \eqref{app:eq:exact_num_margin} and \eqref{app:eq:exact_den_margin},
\begin{equation}\label{app:eq:num_bound_exact}
\int_{A_\delta} e^{-N R_n^0(\latentData)}\Pi_0(d\latentData)
\le
e^{-N(R_n^0(\latentData^\star)+c_{\rm mg}\delta^2)}\Pi_0([0,1]^{dn}),
\end{equation}
and
\begin{equation}\label{app:eq:den_bound_exact}
\int_{\widetilde B_{a_0\delta}} e^{-N R_n^0(\latentData)}\Pi_0(d\latentData)
\ge
e^{-N(R_n^0(\latentData^\star)+c_{\rm mg}\delta^2/4)}\Pi_0(\widetilde B_{a_0\delta}).
\end{equation}
Combining Equations~\ref{app:eq:num_bound_exact}--\ref{app:eq:den_bound_exact} on $E_n^{\mathrm{cond}}\cap\{\eta_n\le\gamma\}$ yields
\begin{equation}\label{app:eq:post_ratio_exact}
\Pi_n(A_\delta\mid \mb D)
\le
\frac{\Pi_0([0,1]^{dn})}{\Pi_0(\widetilde B_{a_0\delta})}
\exp\Big(-N\cdot \tfrac{3}{4}c_{\rm mg}\delta^2 + 2N\gamma\Big).
\end{equation}

It remains to lower-bound $\Pi_0(\widetilde B_{a_0\delta})$. By Assumption~\ref{app:A:model}(T3), if $\pi_0$ denotes the prior density then
\begin{equation}\label{app:eq:prior_density_lower}
\pi_0(\latentData)\ge e^{-cn},
\qquad \latentData\in([0,1]^d)^n,
\end{equation}
for some constant $c>0$. Moreover, by the definition
\begin{equation*}
d_{\cG}(\latentData,\latentData^\star)=\frac1{\sqrt n}\min_{\mb Q\in\cO(d)}\norm{\latentData_c-\latentData_c^\star \mb Q}_F,
\end{equation*}
we have the inclusion
\begin{equation}\label{app:eq:ball_inclusion_rms}
\Big\{\latentData\in([0,1]^d)^n:\ \norm{\latentData-\latentData^\star}_F\le \sqrt n\,a_0\delta\Big\}\ \subseteq\ \widetilde B_{a_0\delta}.
\end{equation}
Set
\begin{equation*}
b_\delta=\min\!\Big\{1,\frac{a_0\delta}{\sqrt d}\Big\},
\end{equation*}
and write $\|\cdot\|_{\max}$ for the maximum absolute coordinate over the $n\times d$ matrix. Since $\norm{\latentData-\latentData^\star}_{\max}\le a_0\delta/\sqrt d$ implies $\norm{\latentData-\latentData^\star}_{F}\le \sqrt n\,a_0\delta$, the box
\begin{equation*}
C_\delta=\Big\{\latentData\in([0,1]^d)^n:\ \norm{\latentData-\latentData^\star}_{\max}\le a_0\delta/\sqrt d\Big\}
\end{equation*}
is contained in the ball $\widetilde B_{a_0\delta}$. 

Note that each side of this box has length at least $b_\delta$, so $\mathrm{Vol}(C_\delta)\ge b_\delta^{nd}$. Therefore, by \eqref{app:eq:prior_density_lower},
\begin{equation}\label{app:eq:prior_mass_lower}
\Pi_0(\widetilde B_{a_0\delta})
\ \ge\ e^{-cn}\cdot \mathrm{Vol}(C_\delta)
\ \ge\ e^{-cn}\,b_\delta^{nd}.
\end{equation}
In particular,
\begin{equation}\label{app:eq:log_prior_mass_exact}
-\log \Pi_0(\widetilde B_{a_0\delta})\ \le\ cn + nd\log\!\Big(\frac{1}{b_\delta}\Big),
\end{equation}
which is $O(n)$ for each fixed $\delta>0$.

Fix $\tau=\frac{3}{16}c_{\rm mg}\delta^2$. Since $\eta_n\to 0$ in probability and $\Pbb(E_n^{\mathrm{cond}})\to1$, the event $\{\eta_n\le \tau\}\cap E_n^{\mathrm{cond}}$ has probability tending to $1$. On this event, \eqref{app:eq:post_ratio_exact} with $\gamma=\tau$ and \eqref{app:eq:log_prior_mass_exact} give
\begin{equation*}
\Pi_n(A_\delta\mid \mb D)
\le
\frac{\Pi_0([0,1]^{dn})}{\Pi_0(\widetilde B_{a_0\delta})}
\exp\Big(-N\cdot \tfrac{3}{8}c_{\rm mg}\delta^2\Big),
\end{equation*}
and the right-hand side tends to $0$ because $N=\binom{n}{2}\asymp n^2$ while \eqref{app:eq:log_prior_mass_exact} is $O(n)$ for fixed $\delta$. This proves the claim.
\end{proof}

\subsection{Proof of Proposition~\ref{app:main_prop}}

Let $\varepsilon_n$ be as in \eqref{app:eq:pairwiseError}, and define
\begin{equation*}
E_n^{\mathrm{app}}
=
\left\{
\sup_{\latentData\in([0,1]^d)^n}
|\widehat R_n(\latentData)-R_n(\latentData)|
\le \varepsilon_n
\right\}.
\end{equation*}
By \eqref{app:eq:riskError}, $\Pbb(E_n^{\mathrm{app}})\to1$. On $E_n^{\mathrm{app}}$, \eqref{app:eq:riskError} gives
\begin{equation*}
\sup_{\latentData\in([0,1]^d)^n}|\widehat R_n(\latentData)-R_n(\latentData)|\le \varepsilon_n.
\end{equation*}
Repeat the proof of Proposition~\ref{app:prop:exact} with $\widehat R_n$ in place of $R_n$.
Everywhere that the previously defined quantity $\eta_n$ appeared, we now have, on $E_n^{\mathrm{app}}$,
\begin{equation}\label{app:eq:hat_eta_bound}
\widehat \eta_n = \sup_{\latentData\in([0,1]^d)^n}|\widehat R_n(\latentData)-R_n^0(\latentData)|
\le \sup_{\latentData\in([0,1]^d)^n}|\widehat R_n(\latentData)-R_n(\latentData)| + \sup_{\latentData\in([0,1]^d)^n}|R_n(\latentData)-R_n^0(\latentData)|
\le \varepsilon_n + \eta_n.
\end{equation}
Since $\Pbb(E_n^{\mathrm{app}})\to1$, $\varepsilon_n\to 0$ by assumption, and $\eta_n\to 0$ in probability by Lemma~\ref{app:lem:ULLN}, \eqref{app:eq:hat_eta_bound} gives $\widehat\eta_n\to 0$ in probability. The remainder of the argument (risk gap $\Rightarrow$ posterior concentration) is unchanged.

\subsection{Proof of Theorem~\ref{thm:NoisyPC_ext}} \label{SecFinalAppA}

Let
\begin{equation*}
\Delta_n^{\mathrm{mix}}
=
\sup_{\latentData_0\in([0,1]^d)^n}\big\|P(\latentData_0,\cdot)-P_{\xi_0}(\latentData_0,\cdot)\big\|_{\mathrm{TV}}.
\end{equation*}
By convexity of total variation and the definition of $P$,
\begin{equation}\label{app:eq:mix_delta_bound}
\Delta_n^{\mathrm{mix}}
\le \int_{\Xi}\sup_{\latentData_0\in([0,1]^d)^n}\big\|P_{\xi}(\latentData_0,\cdot)-P_{\xi_0}(\latentData_0,\cdot)\big\|_{\mathrm{TV}}\,\mu(d\xi)
\le \Delta_n.
\end{equation}
By the uniform-ergodic perturbation bound in \cite{AlquierFrielEverittBoland2016} and \eqref{app:eq:mix_delta_bound},
\begin{equation}\label{app:eq:kernel_perturb_bound}
\sup_{\latentData_0\in([0,1]^d)^n}\big\|\delta_{\latentData_0}P^m-\delta_{\latentData_0}P_{\xi_0}^m\big\|_{\mathrm{TV}}
\le \kappa_n\Delta_n^{\mathrm{mix}}
\le \kappa_n\Delta_n,
\qquad m\ge 0.
\end{equation}
Integrating \eqref{app:eq:kernel_perturb_bound} against $\latentData_0\sim \widetilde\Pi_n$ and using the invariance of $\widetilde\Pi_n$ under $P$ gives
\begin{equation}\label{app:eq:stationary_to_reference_m}
\big\|\widetilde\Pi_n-\widetilde\Pi_n P_{\xi_0}^m\big\|_{\mathrm{TV}}
\le \kappa_n\Delta_n.
\end{equation}
By the triangle inequality and \eqref{app:eq:stationary_to_reference_m},
\begin{align*}
\big\|\widetilde\Pi_n-\widehat\Pi_n^{(\xi_0)}\big\|_{\mathrm{TV}}
&\le \big\|\widetilde\Pi_n-\widetilde\Pi_n P_{\xi_0}^m\big\|_{\mathrm{TV}}
 + \big\|\widetilde\Pi_n P_{\xi_0}^m-\widehat\Pi_n^{(\xi_0)}\big\|_{\mathrm{TV}}.
\end{align*}
Since $P_{\xi_0}$ has invariant distribution $\widehat\Pi_n^{(\xi_0)}$ and is uniformly ergodic, the second term tends to $0$ as $m\to\infty$. Letting $m\to\infty$ yields
\begin{equation}\label{app:eq:invariant_tv_bound}
\big\|\widetilde\Pi_n-\widehat\Pi_n^{(\xi_0)}\big\|_{\mathrm{TV}}
\le \kappa_n\Delta_n.
\end{equation}
Now set $A_\delta=\{\latentData:\ d_{\cG}(\latentData,\latentData^\star)>\delta\}$. Then \eqref{app:eq:invariant_tv_bound} gives
\begin{equation*}
\widetilde\Pi_n(A_\delta)
\le \widehat\Pi_n^{(\xi_0)}(A_\delta)+\kappa_n\Delta_n.
\end{equation*}
The first term goes to $0$ in probability by Proposition~\ref{app:main_prop}, and the second does as well by assumption.

\clearpage
\section{Supplementary algorithms}\label{app:supalg}
\setcounter{algocf}{0}
\counterwithin{algocf}{section}

% alg - building quadtree
\begin{algorithm}[H]
\footnotesize
\caption{Quadtree construction}\label{alg:QT}
\KwIn{$\latentdata_1,\dots,\latentdata_n \in \RR^2$, $\highdata_1,\dots,\highdata_n \in \RR^\Omega$}
\KwOut{quadtree $\mathcal{T}$}
\SetKwFunction{Insert}{Insert}
\SetKwProg{Fn}{Function}{}{}
\Fn{\Insert{$i, k$}}{
    % Check boundary
    \If{$\latentdata_i \notin \mathcal{C}_k$}{
        \Return false
    }
    % Update summaries
    Update $N_k, \bar{\latentdata}_k, \bar{\highdata}_k$\;
    % If empty leaf add point
    \If{$\text{node }k$ is an empty leaf}{
        Store $(\latentdata_i, \highdata_i)$ in $node$ $k$\;
        \Return true
    }
    % Subdivide if leaf and not empty
    \If{$\text{node }k$ is a leaf}{
        Subdivide $\mathcal{C}_k$ into children $c \in children(k)$\;
        \For{each stored point $j$ in node $k$}{
            Determine $c(j)$ and call \Insert{$j, c(j)$}\;
        }
        Mark $node$ $k$ as internal\;
    }
    % Recurse
    \ForEach{$c \in \text{children}(k)$}{
        \If{\Insert{$i, c$}}{
            \Return true
        }
    }
    \Return false
}

Initialize root node $k=0$ with cell $\mathcal{C}_0$\;
\For{$i = 1,\dots,n$}{
    \Insert{$i, 0$}\;
}
\Return{$\mathcal{T}$}
\end{algorithm}

% alg -local quadtree updates
\begin{algorithm}[H]
\footnotesize
\caption{Local quadtree update (forward)}\label{alg:QT_updates}
\KwIn{quadtree $\mathcal{T}$, current point $\latentdata_i$, proposed point $\latentdata_i'$}
\KwOut{locally updated quadtree $\mathcal{T}$}
Let $\mathcal{I}_i = \{\text{node } k \in \mathcal{T} :  \latentdata_i \in \mathcal{C}_k\}$ be the set of nodes along the path from root to leaf containing $\latentdata_i$\;
\For{node $k \in \mathcal{I}_i$}{
    Cache $\bar{\latentdata}_k$\; 
    Update $\bar{\latentdata}_k \leftarrow \bar{\latentdata}_k + \frac{(\latentdata_i' - \latentdata_i)}{N_k}$
}
\Return{$\mathcal{T}$}
\end{algorithm}

% alg  - deterministic evaluation of the BH-BMDS likelihood (BH-BMDS)
\begin{figure}[h]
\centering
\vspace{-0.5em}
\begin{minipage}[t]{0.48\linewidth}
\vspace{0pt}
\begin{algorithm}[H]
\footnotesize
\caption{Deterministic evaluation of the BH-BMDS log-likelihood}\label{alg:deter}
\KwIn{$\latentdata_1,\dots, \latentdata_n \in \RR^2$; $\highdata_1, \dots, \highdata_n \in \RR^\Omega$; quadtree $\mathcal{T}$}
\KwOut{BH-BMDS log-likelihood $\tilde{\ell}(\cdot \mid \mathcal{T})$}
\SetKwFunction{BH}{BH}
\SetKwProg{Fn}{Function}{}{}
\Fn{\BH{$i, k$}}{
    Compute $\rho_{ik} = w_k / \bar r_{ik}$\;
    \eIf{node $k$ is leaf \textbf{or} $\rho_{ik} < \theta$}{
        \Return{
        $N_k \biggl[
            \frac{(\bar{D}_{ik} - \bar{r}_{ik})^2}{2\sigma^2}
            + \log \Phi\!\left( \frac{\bar{r}_{ik}}{\sigma} \right)
        \biggr]$
        }
    }{
        \Return{$\sum_{c \in \text{children}(k)} \BH(i, c)$}
    }
}
Initialize $\tilde{\ell}(\cdot \mid \mathcal{T}) = -N\log(2\pi\sigma^2)$\;
\For{$i = 1,\dots,n$}{
    $\tilde{\ell}(\latentdata_i \mid \mathcal{T}) \leftarrow \BH(i, 0)$\;
    $\tilde{\ell}(\cdot \mid \mathcal{T}) \leftarrow \tilde{\ell}(\cdot \mid \mathcal{T}) + \tilde{\ell}(\latentdata_i \mid \mathcal{T})$\;
}
\Return{$\tilde{\ell}(\cdot \mid \mathcal{T})$}
\end{algorithm}
\vspace{-0.5em}
\end{minipage}
\hfill
%alg 3 - stochastic pointwise evaluation of the BH-BMDS likelihood
\begin{minipage}[t]{0.48\linewidth}
\vspace{0pt}
\begin{algorithm}[H]
\footnotesize
\caption{Stochastic evaluation of the pointwise contribution to the BH-BMDS log-likelihood}\label{alg:stoch}
\KwIn{target point $\latentdata_i$, quadtree $\mathcal{T}$}
\KwOut{$\tilde{\ell}(\latentdata_i \mid \mathcal{T})$}
\SetKwFunction{SBH}{SBH}
\SetKwProg{Fn}{Function}{}{}
\Fn{\SBH{$i, k$}}{
    Draw $u \sim \text{Unif}(0,1)$\;
    Compute $p(\rho_{ik} \mid \theta)$ \Eqref{eq::logfunc}\;
    \eIf{node $k$ is leaf \textbf{or} $u < p(\rho_{ik} \mid \theta)$}{
        \Return{
        $N_k \biggl[
            \frac{(\bar{D}_{ik} - \bar{r}_{ik})^2}{2\sigma^2}
            + \log \Phi\!\left( \frac{\bar{r}_{ik}}{\sigma} \right)
        \biggr]$
        }
    }{
        \Return{$\sum_{c \in \text{children}(k)} \SBH(i, c)$}
    }
}
$\tilde{\ell}(\latentdata_i \mid \mathcal{T}) \leftarrow \texttt{SBH(i, 0)}$\;
\Return{$\tilde{\ell}(\latentdata_i \mid \mathcal{T})$}
\end{algorithm}
\vspace{-0.5em}
\end{minipage}
\vspace{-1em}
\end{figure}

\clearpage
\section{Additional plots}
\setcounter{figure}{0}
\counterwithin{figure}{section}

\begin{figure}[H]
    \centering
    \includegraphics[width=0.9\linewidth]{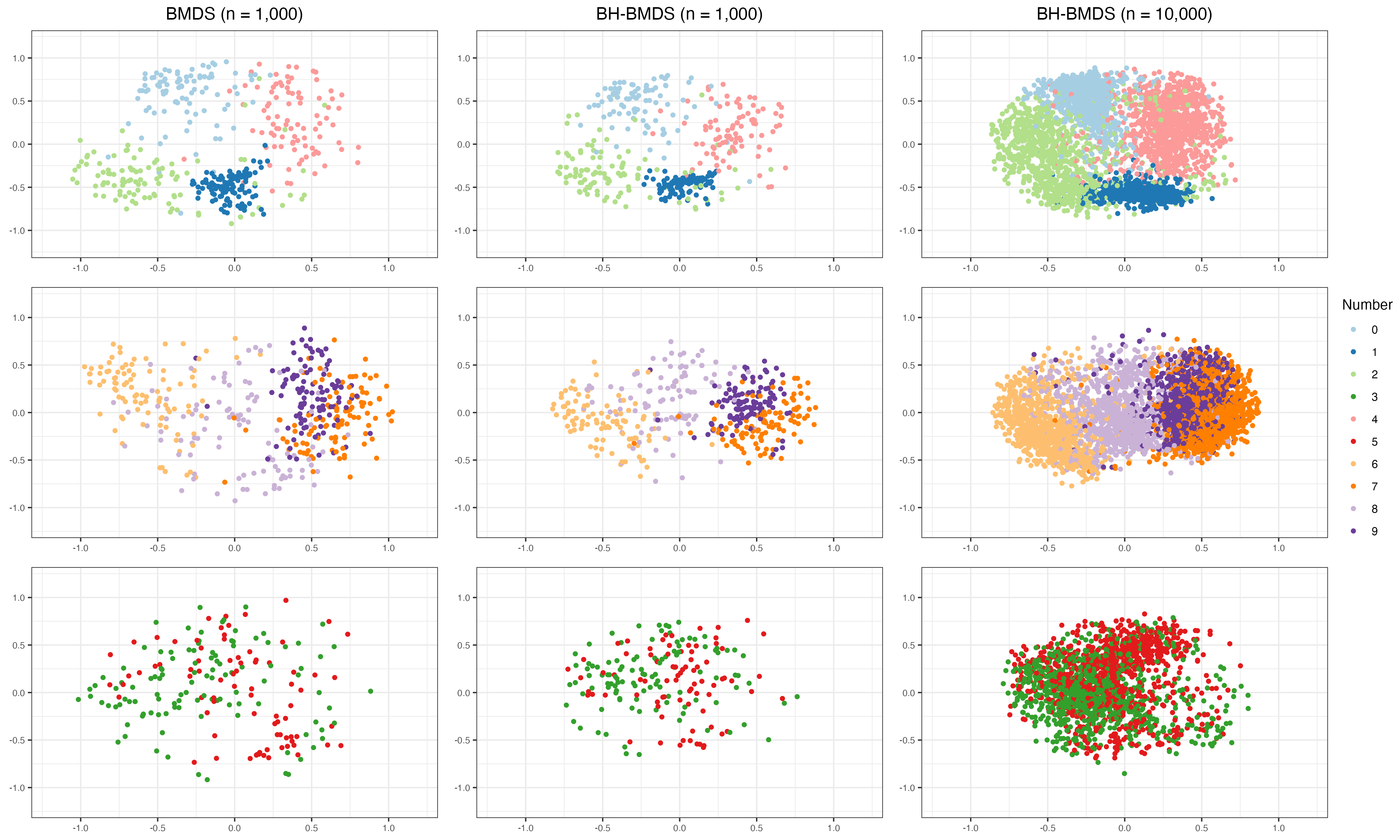}
    \caption{Posterior Procrustes-aligned means across MCMC iterations for $1{,}000$ and $10{,}000$ embedded MNIST handwritten digital images under full Bayesian multidimensional scaling (BMDS) and Barnes--Hut Bayesian multidimensional scaling (BH-BMDS) frameworks.}
    \label{fig:mnist_all}
\end{figure}

\begin{figure}[H]
    \centering
    \includegraphics[width=0.9\linewidth]{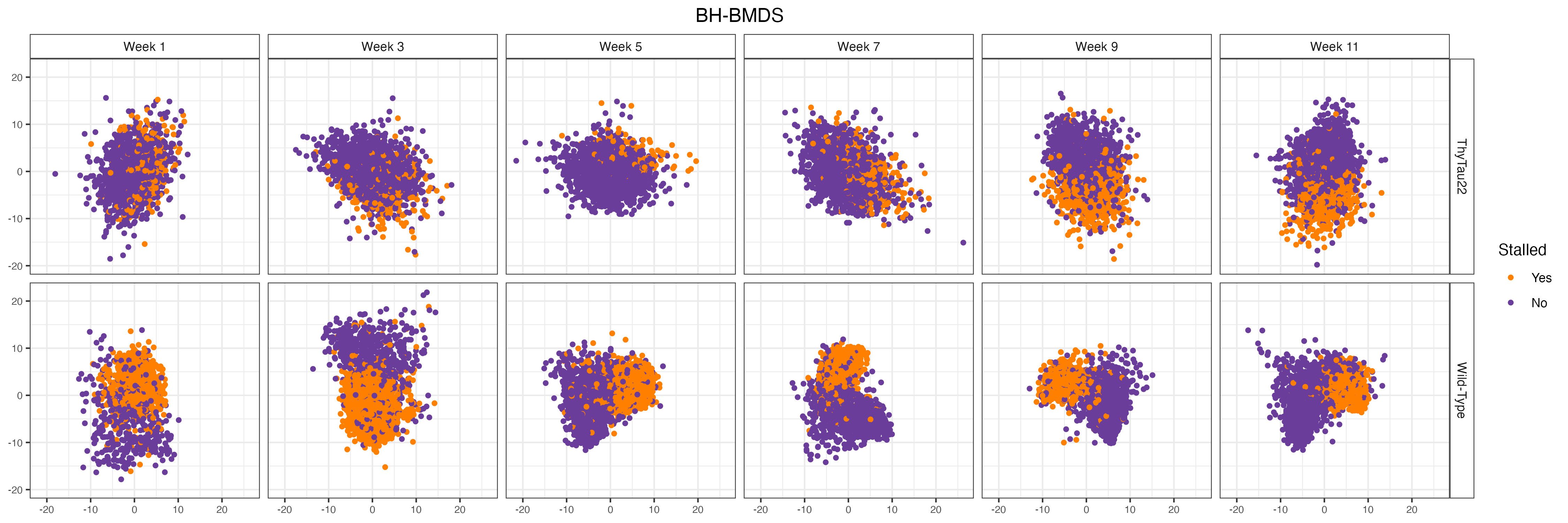}
    \caption{Posterior Procrustes-aligned means across MCMC iterations for ThyTau22 and Wild-Type mouse under the Barnes--Hut Bayesian multidimensional scaling (BH-BMDS) framework. High-dimensional embeddings are obtain from a trained MaGNet GNN \citep{magnet} to decode behavior states (stalled vs.~moving) on local field potential recordings. We observe more clear clustering for a Wild-Type than ThyTau22 mouse.}
    \label{fig:alz_all}
\end{figure}
\end{document}